\documentclass[10pt]{article}

\usepackage{ragged2e}
% slightly move the page for better draft
\usepackage{geometry}
 \geometry{%
  letterpaper,
  textheight=562pt,
  textwidth=384pt,
  lmargin=60pt,
  %centering,
  marginpar=120pt,
  headheight=50pt,
  headsep=12pt,
  footskip=36pt,
  footnotesep=24pt plus 2pt minus 12pt,
 }

% adjust line spacing for more ``air''
\makeatletter
\renewcommand\normalsize{%
   \@setfontsize\normalsize\@xpt{14}%
   \abovedisplayskip 10\p@ \@plus2\p@ \@minus5\p@
   \abovedisplayshortskip \z@ \@plus3\p@
   \belowdisplayshortskip 6\p@ \@plus3\p@ \@minus3\p@
   \belowdisplayskip \abovedisplayskip
   \let\@listi\@listI}
\normalbaselineskip=14pt
\normalsize
\renewcommand\small{%
   \@setfontsize\small\@ixpt{12}%
   \abovedisplayskip 8.5\p@ \@plus3\p@ \@minus4\p@
   \abovedisplayshortskip \z@ \@plus2\p@
   \belowdisplayshortskip 4\p@ \@plus2\p@ \@minus2\p@
   \def\@listi{\leftmargin\leftmargini
               \topsep 4\p@ \@plus2\p@ \@minus2\p@
               \parsep 2\p@ \@plus\p@ \@minus\p@
               \itemsep \parsep}%
   \belowdisplayskip \abovedisplayskip
}
\renewcommand\footnotesize{%
   \@setfontsize\footnotesize\@viiipt{10}%
   \abovedisplayskip 6\p@ \@plus2\p@ \@minus4\p@
   \abovedisplayshortskip \z@ \@plus\p@
   \belowdisplayshortskip 3\p@ \@plus\p@ \@minus2\p@
   \def\@listi{\leftmargin\leftmargini
               \topsep 3\p@ \@plus\p@ \@minus\p@
               \parsep 2\p@ \@plus\p@ \@minus\p@
               \itemsep \parsep}%
   \belowdisplayskip \abovedisplayskip
}
\renewcommand\scriptsize{\@setfontsize\scriptsize\@viipt\@viiipt}
\renewcommand\tiny{\@setfontsize\tiny\@vpt\@vipt}
\renewcommand\large{\@setfontsize\large\@xipt{15}}
\renewcommand\Large{\@setfontsize\Large\@xiipt{16}}
\renewcommand\LARGE{\@setfontsize\LARGE\@xivpt{18}}
\renewcommand\huge{\@setfontsize\huge\@xxpt{30}}
\renewcommand\Huge{\@setfontsize\Huge{24}{36}}
\makeatother

% better captions
\usepackage[tableposition=top]{caption}
\captionsetup{font=small,labelfont=sf,labelsep=endash,skip=1ex}

\usepackage{ifpdf}
\usepackage{xcolor}
\usepackage{paralist}
\usepackage{tabularx}
\usepackage{booktabs}
\usepackage{multirow}
\usepackage{graphicx}
\usepackage{sidecap}
\usepackage{natbib}
\renewcommand{\cite}{\citep}
\usepackage{amssymb}
\usepackage{amsmath}
\usepackage{mathdots}
\usepackage{lscape}
\makeatletter
\def\vcdots{\vbox{\baselineskip4\p@ \lineskiplimit\z@
    \kern3\p@\hbox{.}\hbox{.}\hbox{.}\kern3\p@}}
\makeatother
\usepackage{caption}
\usepackage{subfigure}
\usepackage{array}
\usepackage{url}
\usepackage{float}

\ifpdf
  \usepackage{epstopdf}
	\usepackage{microtype}
\fi
\graphicspath{{./}{./figures/}}

\usepackage{algorithm}
\usepackage[noend]{algorithmic}

\colorlet{TufteRed}{red!80!black}

\definecolor{theblue}{RGB}{0,0,180}
\colorlet{thered}{TufteRed}

\usepackage{dgleich-math}

\ifpdf%
  \RequirePackage[colorlinks,pdfdisplaydoctitle
      ,citecolor=theblue
      ,linkcolor=theblue
      ,urlcolor=theblue
      ,hyperfootnotes=false,pdftex,pagebackref]{hyperref}%
\else
  \RequirePackage[colorlinks,pdfdisplaydoctitle
      ,citecolor=theblue
      ,linkcolor=theblue
      ,urlcolor=theblue
      ,hyperfootnotes=false,dvipdfm,pagebackref]{hyperref}
\fi
%% Configure backreference from bibliography to pages in hyperref
\renewcommand*{\backref}[1]{}
\renewcommand*{\backrefalt}[4]{%
  \ifcase #1 %
    No citations.%
  \or
    Cited on page #2.%
  \else
    Cited on pages #2.%
  \fi
}

\newcommand{\secref}[1]{\hyperref[sec:#1]{section~\ref*{sec:#1}}}
\newcommand{\Secref}[1]{\hyperref[sec:#1]{Section~\ref*{sec:#1}}}
\newcommand{\secrefp}[1]{\hyperref[sec:#1]{(section~\ref*{sec:#1})}}

\newcommand{\thmref}[1]{\hyperref[thm:#1]{theorem~\ref*{thm:#1}}}
\newcommand{\Thmref}[1]{\hyperref[thm:#1]{Theorem~\ref*{thm:#1}}}
\newcommand{\thmrefp}[1]{\hyperref[thm:#1]{(theorem~\ref*{thm:#1})}}

\newcommand{\figrefp}[1]{\hyperref[fig:#1]{(figure~\ref*{fig:#1})}}
\newcommand{\figref}[1]{\hyperref[fig:#1]{figure~\ref*{fig:#1}}}
\newcommand{\Figref}[1]{\hyperref[fig:#1]{Figure~\ref*{fig:#1}}}
\usepackage{transparent}

\newcommand{\vvbar}{\bar{\vv}}
% kill the existing \Re operator
\let\Re\relax 
\DeclareMathOperator*{\Re}{Re}
\newcommand{\Real}{\Re\BracesOf}

\renewcommand{\abstract}{}

\title{A Dynamical System for PageRank \\with Time-Dependent Teleportation}
\author{David F.~Gleich \and Ryan A.~Rossi
\and
Purdue University\\
Department of Computer Science\\
305 N. University St., West Lafayette, IN 47906\\
\{dgleich, rrossi\}@purdue.edu
}

\begin{document}

\maketitle

\begin{abstract}
We propose a dynamical system that captures changes to the network centrality of nodes as external interest in those nodes vary.  We derive this system by adding time-dependent teleportation to the PageRank score.  The result is not a single set of importance scores, but rather a time-dependent set.  These can be converted into ranked lists in a variety of ways, for instance, by taking the largest change in the importance score.  For an interesting class of the dynamic teleportation functions, we derive closed form solutions for the dynamic PageRank vector.  The magnitude of the deviation from a static PageRank vector is given by a PageRank problem with complex-valued teleportation parameters.  Moreover, these dynamical systems are easy to evaluate.  We demonstrate the utility of dynamic teleportation on both the article graph of Wikipedia, where the external interest information is given by the number of hourly visitors to each page, and the Twitter social network, where external interest is the number of tweets per month.  For these problems, we show that using information from the dynamical system helps improve a prediction task and identify trends in the data.
\end{abstract}

%\begin{abstract}
%The importance of nodes in a network constantly fluctuates based on changes in the network structure as well as changes in external interest.
%We propose an evolving teleportation adaptation of the PageRank method to capture how changes in external interest influence the importance of a node.  This framework seamlessly generalizes PageRank because the importance of a node will converge to the PageRank values if the external influence stops changing.  We demonstrate the effectiveness of the evolving teleportation on the Wikipedia graph and the Twitter social network.  The external interest is given by the number of hourly visitors to each page and the number of monthly tweets for each user.
%\end{abstract}

%%%%%%%%%%%%%%%%%%
%INTRODUCTIONS
%%%%%%%%%%%%%%%%%%
\section{Introduction}
\label{sec:intro}

% Intro to PageRank
The PageRank vector of a directed graph is the stationary distribution of a Markovian \emph{random surfer}.  At a node, the random surfer either
\begin{compactenum}
\item transitions to a new node from the set of out-edges, or
\item does something else (e.g.~leaves the graph and then randomly returns)~\cite{page1998pagerank,langville2006-book}.
\end{compactenum}
The probability that the surfer performs the first action is known as the damping parameter in PageRank, denoted $\alpha$.  The second action is called teleporting and is modeled by the surfer picking a node at random according to a distribution called the teleportation distribution vector or personalization vector. This PageRank Markov chain always has a unique stationary distribution for any $0 \le \alpha < 1$. In this paper, we focus on the teleportation distribution vector $\vv$ and study how changing teleportation behavior manifests itself in a dynamical system formulation of PageRank.

To proceed further, we need to formalize the PageRank model.  Let $\mA$ be the adjacency matrix for a graph where
$A_{i,j}$ denotes an edge from node $i$ to node $j$.  To avoid a proliferation of transposes, we define $\mP$
as the transposed transition matrix for a random-walk on
a graph: 
\[ P_{j,i} = \text{ probability of transitioning from node $i$ to node $j$. } \]
Hence, the matrix $\mP$ is \emph{column-stochastic} instead of
row-stochastic, which is the standard in probability theory.
Throughout this manuscript, we utilize uniform random-walks on a graph, 
in which case $\mP = \mA^T \mD^{-1}$ where $\mD$ is a diagonal matrix with
the degree of each node on the diagonal.  However, none of the theory
is restricted to this type of random walk and any 
column-stochastic matrix will do.  If any nodes have no out-links, we assume that they are adjusted in one of the standard ways~\cite{boldi2007-traps}.  Let $\vv$ be a teleportation distribution vector such that $v_i \ge 0$ and $\sum_i v_i = 1$.  This vector models where the surfer will transition when ``doing something else.''  The PageRank Markov chain then has the transition matrix: 
\[ \alpha \mP  + (1-\alpha) \vv \ve^T. \]
While finding the stationary distribution of a Markov chain usually involves computing an eigenvector or solving a singular linear system, the PageRank chain has a particularly simple form for the stationary distribution vector $\vx$:
\[ ( \mI - \alpha \mP) \vx = (1-\alpha) \vv. \]

% What's know about changing v
Sensitivity of PageRank with respect to $\vv$ is fairly well understood.  \Citet{langville2006-book} devote a section to determining the Jacobian of the PageRank vector with respect to $\vv$.  The choice of $\vv$ is often best guided by an application specific measure.  By setting $\vv = \ve_i$, that is, the $i$th canonical basis vector: 
\[ \ve_i = \left[ \begin{smallmatrix} 0 \\ \vcdots \\ 0 \\ 1 \\ 0 \\ \vcdots \\ 0 \end{smallmatrix} \right] \begin{smallmatrix} \\ \phantom{\vcdots} \\ \\ \text{$i$th row,} \\ \\ \phantom{\vcdots}\\ \end{smallmatrix} \]  PageRank computes a highly localized diffusion that is known to produce empirically meaningful clusters and theoretically supported clusters~\cite{andersen2006-local,tong2006-random-walk-restart}.  By choosing $\vv$ based on a set of known-to-be-interesting nodes, PageRank will compute an expanded set of interesting nodes~\cite{gyongyi2004-trustrank,singh2007-matching-topology}.  Yet, in all of these cases, $\vv$ is chosen once for the graph application or particular problem.  

In the original motivation of PageRank~\cite{page1998pagerank}, the distribution $\vv$ should model how users behave on the web when they don't click a link.  When this intuition is applied to a site like Wikipedia, this suggests that the teleportation function should vary as particular topics become interesting.  For instance, in our experiments \secrefp{eval}, we examine the number of page views for each Wikipedia article during a period where a major earthquake occurred.  Suddenly, page views to earthquake spike -- presumably as people are searching for that phrase.  We wish to \emph{include} this behavior into our PageRank model to understand what is now important in light of a radically different behavior.  One option would be to recompute a new PageRank vector given the observed teleporting behavior at the current time.  Our proposal for a dynamical system is another alternative.  That is, we define a new model where teleportation is the time-dependent function:
\[ \vv(t). \]
At each time $t$, $\vv(t)$ is a probability distribution of where the random walk teleports.  Figure~\ref{fig:dpr-model} illustrates this model.  We return to a comparison between this approach and solving PageRank systems in \secref{related-work}.

\begin{figure}
\centering
\includegraphics{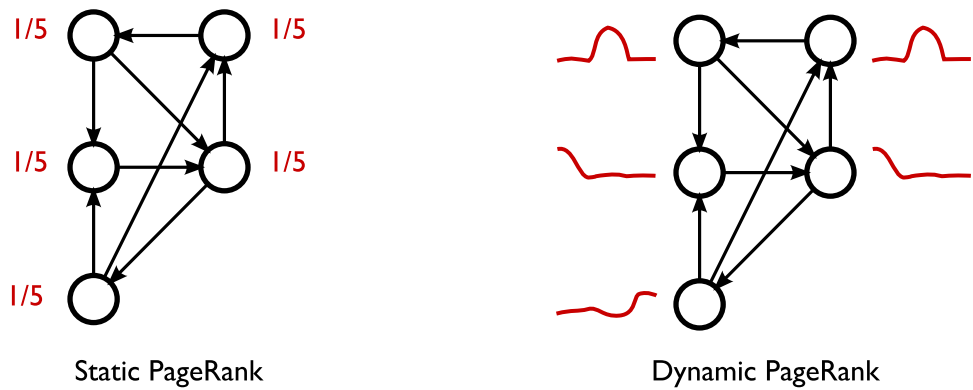}
\caption{At left, we have PageRank with static teleportation.  At each step, the teleportation is to each node with uniform probability $1/5$.  At right, we have the PageRank model with dynamic teleportation.  In this case, the teleportation distribution (illustrated in red) \emph{changes with time}. Thus, the upper nodes are teleported to more during the middle time regime.  In both cases, the graph is fixed.  We study the effect of such dynamic teleportation on the PageRank scores.}
\label{fig:dpr-model}
\end{figure}

The dynamical system we propose is a generalization of PageRank in the sense that if $\vv(t)$ is a constant function in time, then we converge to the standard PageRank vector \thmrefp{generalization}.  Additionally, we can analyze the dynamical PageRank function for some simple oscillatory teleportation functions $\vv(t)$.  Bounding the deviation of these oscillatory PageRank values from the static PageRank vector involves solving a \emph{PageRank problem with complex teleportation}~\cite{horn2008-parametric-google,constantine2010-rapr}.  This result is, perhaps, the first non-analytical use of PageRank with a complex teleportation parameter.

In our new dynamical system, we do not compute a single ranking vector as others have done with time-dependent rankings~\cite{grindrod2011communicability}, rather we compute a time-dependent ranking function $\vx(t)$, the dynamic PageRank vector at time $t$, from which we can extract different static rankings \secrefp{rank-to-time}. There are two complications with using empirically measured data.  First, we must choose a time-scale for our ODE based on the period of our page-view data \secrefp{time-scale}.  Put a bit informally, we must pick the time-unit for our ODE -- it is not dimensionless. We analytically show that some choices of the time-scale amount to solving the PageRank system for each change in the teleportation vector.  Second, we also investigate smoothing the measured page view data \secrefp{smoothing}.  To compute this dependent ranking function $\vx(t)$ we discuss ordinary differential equation (ODE) integrators in \secref{methods}.  

We discuss the impact of these choices on two problems: page views from Wikipedia and a retweet network from Twitter.  We also investigate how the rankings extracted from our methods differ from those extracted by other static ranking measurements.  We can use these rankings for a few interesting applications.  Adding the dynamic PageRank scores to a prediction task decreases the average error \secrefp{prediction} for Twitter.  Clustering the dynamic PageRank scores yields many of the standard time-series features in social networks~\secrefp{clustering}. Finally, using Granger causality testing on the dynamic PageRank scores helps us find a set of interesting links in the graph~\secrefp{granger}.

We make our code and data available in the spirit of reproducible research:

\centerline{ \footnotesize \url{http://www.cs.purdue.edu/homes/dgleich/codes/dynsyspr-im}}

%%%%%%%%%%%%%%%%%%
%Dynamic PageRank
%%%%%%%%%%%%%%%%%%
\section{PageRank with time-dependent teleportation}\label{sec:dynamic-pagerank}
\label{sec:model}
\label{sec:dpr}

We begin our discussion by summarizing the notation introduced thus far
in Table~\ref{table:notation}.

\begin{table}[t!]
\caption{Summary of notation. Matrices are bold, upright roman letters; vectors are bold, lowercase roman letters; and scalars are unbolded roman or greek letters.  Indexed elements are vectors if they are bolded, or scalars if unbolded.}
\label{table:notation}
\centering 
\small
\begin{tabularx}{\linewidth}{rX} 
\toprule
$\imath$ & the imaginary number \\ 
$n$ & number of nodes in a graph  \\
$\ve$ & the vector of all ones \\
$\mP$ & column stochastic matrix  \\
$\alpha$ & damping parameter in PageRank  \\
$\vv$ & teleportation distribution vector \\
$\vx$ & solution to the PageRank computation: $(\eye - \alpha \mP) \vx = (1-\alpha) \vv$  \\
\midrule
$\vx(t)$ & solution to the dynamic PageRank computation for time $t$ \\
$\vv(t)$ & a teleportation distribution vector at time $t$  \\
\midrule
$\vc$ & the cumulative rank function \\
$\vr$ & the variance rank function \\
$\vd$ & the difference rank function \\
\midrule
$\vv_k$ & the teleportation distribution for the $k$th observed page-views vector \\
$\theta$ & decay parameter for time-series smoothing \\
$s$ & the time-scale of the dynamical system \\
$t_{\max}$ & the last time of the dynamical system \\
\bottomrule
\end{tabularx}
\end{table}
 
In order to incorporate changes in the teleportation into a new model for PageRank, we begin by reformulating the standard PageRank algorithm in terms of changes to the PageRank values for each page. This step allows us to state PageRank as a dynamical system, in which case we can easily incorporate changes into the vector.

The standard PageRank algorithm is the power method for the PageRank Markov chain~\cite{langville2006-book}.  After simplifying this iteration by assuming that $\ve^T \vx = 1$, it becomes:
\[ \vx\itn{k+1} = \alpha \mP \vx \itn{k} + (1-\alpha) \vv. \]
In fact, this iteration is equivalent to the Richardson iteration for the PageRank linear system $(\eye - \alpha \mP) \vx = (1-\alpha) \vv$.  This fact is relevant because the Richardson iteration is usually defined: 
\[ \vx\itn{k+1} = \vx\itn{x} + \omega \left[ (1-\alpha) \vv - (\eye - \alpha \mP) \vx\itn{k} \right]. \]
For $\omega = 1$, we have: 
\[ \Delta \vx\itn{k} = \vx\itn{k+1} - \vx\itn{k} = \alpha \mP \vx\itn{k} + (1-\alpha) \vv - \vx\itn{k} = (1-\alpha) \vv - (\eye - \alpha \mP) \vx\itn{k}. \]
Thus, changes in the PageRank values at a node \emph{evolve} based on the value $(1-\alpha) \vv - (\eye - \alpha \mP) \vx\itn{k}$.  We reinterpret this update as a continuous time dynamical system:
\begin{equation} 
 \vx'(t) = (1-\alpha) \vv - (\eye - \alpha \mP) \vx(t). 
\end{equation}
To define the PageRank problem with time-dependent teleportation, we make $\vv(t)$ a function of time.

\begin{definition}
\label{defn:dpr}
The dynamic PageRank model with time-dependent teleportation is the solution of 
\begin{equation} %\label{eq:dynamic-tele-pr}  
\label{eq:pr-dynamical}
 \vx'(t) = (1-\alpha) \vv(t) - (\eye - \alpha \mP) \vx(t)
\end{equation}
where $\vx(0)$ is a probability distribution vector and
$\vv(t)$ is a probability distribution vector for all $t$.
\end{definition}

In the dynamic PageRank model, the PageRank values $\vx(t)$ may not ``settle'' or converge to some fixed vector $\vx$.  We see this as a feature of the new model as we plan to utilize information from the evolution and changes in the PageRank values.  For instance, in \secref{rank-to-time}, we discuss various functions of $\vx(t)$ that define a rank.
Next, we state the solution of the problem.

\begin{lemma}
The solution of the dynamical system: 
\[ \vx'(t) = (1-\alpha) \vv(t) - (\eye - \alpha \mP) \vx(t) \]
is 
\[ \vx(t) = \exp[ -(\eye - \alpha \mP)t ] \vx(0) + (1-\alpha) \int_0^t \exp[-(\eye - \alpha \mP) (t-\tau)] \vv(\tau) \, d \tau. \]
\end{lemma}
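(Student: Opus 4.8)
The plan is to recognize this as a first-order linear ordinary differential equation with a constant coefficient matrix and a time-dependent forcing term, and to solve it by the standard integrating-factor (variation-of-parameters) argument. Set $\mathbf{M} := \eye - \alpha\mP$, so the system reads $\vx'(t) + \mathbf{M}\vx(t) = (1-\alpha)\vv(t)$. The key observation is that $\mathbf{M}$ commutes with every power of itself, hence with the matrix exponential $\exp[\mathbf{M}t]$, so that $\frac{d}{dt}\exp[\mathbf{M}t] = \mathbf{M}\exp[\mathbf{M}t] = \exp[\mathbf{M}t]\mathbf{M}$. Using this, the product rule gives
\[ \tfrac{d}{dt}\bigl(\exp[\mathbf{M}t]\,\vx(t)\bigr) = \exp[\mathbf{M}t]\,\vx'(t) + \exp[\mathbf{M}t]\,\mathbf{M}\,\vx(t) = (1-\alpha)\exp[\mathbf{M}t]\,\vv(t). \]

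Next I would integrate this identity from $0$ to $t$, obtaining $\exp[\mathbf{M}t]\,\vx(t) - \vx(0) = (1-\alpha)\int_0^t \exp[\mathbf{M}\tau]\,\vv(\tau)\,d\tau$, then left-multiply by $\exp[-\mathbf{M}t]$ — legitimate since $\exp[\mathbf{M}t]$ is invertible with inverse $\exp[-\mathbf{M}t]$ — and fold $\exp[-\mathbf{M}t]\exp[\mathbf{M}\tau] = \exp[-\mathbf{M}(t-\tau)]$ into the integrand; this is exactly the claimed formula. An equivalent route is to skip the derivation and simply \emph{verify} the stated expression: at $t=0$ the integral is empty and the formula returns $\vx(0)$, and differentiating term by term — with $\frac{d}{dt}\exp[-\mathbf{M}t] = -\mathbf{M}\exp[-\mathbf{M}t]$ on the first term, and the Leibniz rule for differentiation under the integral sign on the second, whose boundary contribution at $\tau = t$ is $(1-\alpha)\exp[-\mathbf{M}(t-t)]\vv(t) = (1-\alpha)\vv(t)$ and whose interior contribution is $-\mathbf{M}$ applied to the full integral — recovers $\vx'(t) = (1-\alpha)\vv(t) - \mathbf{M}\,\vx(t)$. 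Uniqueness of solutions to linear ODEs then confirms this is \emph{the} solution.

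There is no genuine obstacle; the only points that warrant a sentence of justification are the matrix-calculus facts: that $\mathbf{M}$ commutes with $\exp[\mathbf{M}t]$, so the product rule collapses cleanly; that $\exp[\mathbf{M}t]$ is always invertible with inverse $\exp[-\mathbf{M}t]$; and that differentiation under the integral sign is valid here. All of these hold because $\mathbf{M}$ is a fixed matrix, the map $t \mapsto \exp[\mathbf{M}t]$ is entire, and $\vv(\cdot)$ is assumed to be (piecewise) continuous in $t$. Notably, no spectral structure of $\mathbf{M} = \eye - \alpha\mP$ is needed for this lemma; that only enters later when $\vv(t)$ is specialized to obtain closed forms.
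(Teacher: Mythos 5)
Your argument is correct and is exactly the standard integrating-factor (variation-of-parameters) derivation that the paper itself does not spell out, since it merely cites a standard dynamical-systems text for this lemma. The points you flag (commutativity of $\eye-\alpha\mP$ with its own exponential, invertibility of the exponential, differentiation under the integral for piecewise-continuous $\vv$) are the right ones to mention, and no spectral information is needed here, as you note.
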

This result is found in standard texts on dynamical systems, for example~\cite{Berman-1989-nonnegative}.

Given this solution, let us quickly verify a few properties of this system:
\begin{lemma}
The solution of a dynamical PageRank system $\vx(t)$ is a probability distribution ($\vx(t) \ge 0$ and $\ve^T \vx(t) = 1$) for all $t$. 
\end{lemma}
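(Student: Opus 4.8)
The plan is to use the closed-form solution from the previous lemma and verify the two claimed properties separately, relying on the nonnegativity structure of $\alpha \mP$ and the stochasticity of $\mP$, $\vx(0)$, and $\vv(\tau)$. Write $\mM = \eye - \alpha\mP$ so that $\vx(t) = \exp[-\mM t]\vx(0) + (1-\alpha)\int_0^t \exp[-\mM(t-\tau)]\vv(\tau)\,d\tau$. The key algebraic observation is that $\exp[-\mM t] = \exp[-(\eye - \alpha\mP)t] = e^{-t}\exp[\alpha t \mP]$, and since $\mP \ge 0$ (entrywise), $\exp[\alpha t\mP] = \sum_{k\ge 0} (\alpha t)^k \mP^k / k!$ is a nonnegative matrix for every $t \ge 0$. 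Hence $\exp[-\mM t] \ge 0$ entrywise for all $t \ge 0$.

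For nonnegativity of $\vx(t)$: since $\vx(0) \ge 0$ and $\vv(\tau) \ge 0$, the first term $\exp[-\mM t]\vx(0)$ is a product of a nonnegative matrix and a nonnegative vector, hence nonnegative; likewise each integrand $\exp[-\mM(t-\tau)]\vv(\tau)$ is nonnegative for $0 \le \tau \le t$ (since $t-\tau \ge 0$), and $(1-\alpha) > 0$, so the integral term is nonnegative. Therefore $\vx(t) \ge 0$.

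For the normalization $\ve^T\vx(t) = 1$: the cleanest route is to differentiate $f(t) := \ve^T\vx(t)$ directly from the ODE rather than from the integral form. Since $\mP$ is column-stochastic, $\ve^T\mP = \ve^T$, so $\ve^T(\eye - \alpha\mP) = (1-\alpha)\ve^T$. Applying $\ve^T$ to \eqref{eq:pr-dynamical} gives $f'(t) = (1-\alpha)\ve^T\vv(t) - (1-\alpha)\ve^T\vx(t) = (1-\alpha)(1 - f(t))$, using $\ve^T\vv(t) = 1$. With $f(0) = \ve^T\vx(0) = 1$, this scalar linear ODE has the unique solution $f(t) \equiv 1$. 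Combining the two parts, $\vx(t)$ is a probability distribution for all $t \ge 0$.

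The only mild subtlety is justifying the interchange of $\ve^T$ with differentiation (immediate, since $\ve^T$ is a fixed linear functional) and noting that the argument is valid for all $t \ge 0$ regardless of how $\vv(t)$ varies, as long as it remains a probability distribution pointwise; no continuity or smoothness beyond integrability of $\vv$ is needed. I do not anticipate a real obstacle here — the structure of PageRank (nonnegative $\mP$, column-stochasticity) does all the work.
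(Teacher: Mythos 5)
Your proposal is correct and follows essentially the same route as the paper: nonnegativity comes from the entrywise nonnegativity of $\exp[-(\eye-\alpha\mP)t] = e^{-t}\exp(\alpha t \mP)$ applied to the closed-form solution, and normalization comes from applying $\ve^T$ to the ODE using column-stochasticity of $\mP$. Your treatment of the normalization is in fact slightly tighter than the paper's, which only remarks that the sum of $\vx'(t)$ vanishes when $\ve^T\vx(t)=1$, whereas you solve the scalar equation $f'(t)=(1-\alpha)(1-f(t))$ with $f(0)=1$ outright.
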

\begin{proof}
 The model requires that $\vx(0)$ is a probability distribution.  Thus, $\vx(0) \ge 0$ and $\ve^T \vx(0) = 1$.
Assuming that the sum of $\vx(t)$ is $1$, then the sum of the derivative $\vx'(t)$ is $0$ as a quick calculation shows.
The closed form solution above is also nonnegative because the matrix $\exp[-(\eye - \alpha \mP)] = \exp(\alpha \mP) \exp(-1) \ge 0$ and both $\vx(0)$ and $\vv(t)$ are non-negative for all $t$.  (This property is known as exponential non-negativity and it is another property of $M$-matrices such as $\eye - \alpha \mP$~\cite{Berman-1989-nonnegative}.)
\end{proof}

\subsection{A generalization of PageRank}
\label{sec:generalization}

This closed form solution can be used to solve a version of the dynamic problem that reduces to the PageRank problem with static teleportation.
If $\vv(t) = \vv$ is constant with respect to time, then
\[ \int_0^t \exp[-(\eye - \alpha \mP) (t-\tau)] \vv(\tau) \, d \tau = (\eye - \alpha \mP)^{-1} \vv - \exp[ -(\eye - \alpha \mP)t ] (\eye - \alpha \mP)^{-1} \vv. \]
Hence, for constant $\vv(t)$:
\[
\vx(t) = \exp[ -(\eye - \alpha \mP)t ] (\vx(0) - \vx) + \vx,
\]
where $\vx$ is the solution to static PageRank: $(\eye - \alpha \mP) \vx = (1-\alpha)\vv$.
Because all the eigenvalues of $-(\eye - \alpha \mP)$ are less than $0$, the matrix exponential terms disappear in a sufficiently long time horizon.
Thus, when $\vv(t) = \vv$, nothing has changed. We recover the original PageRank vector $\vx$ as the steady-state solution:
\[ \lim_{t \to \infty} \vx(t) = \vx \text{ the PageRank vector. } \]
This derivation shows that dynamic teleportation PageRank is a generalization of the PageRank vector.  We summarize this discussion as: 
\begin{theorem}
\label{thm:generalization}
PageRank with time-dependent teleportation is a generalization of PageRank. If $\vv(t) = \vv$, then
the solution of the ordinary differential equation: 
\[ \vx'(t) = (1-\alpha) \vv - (\eye - \alpha \mP) \vx(t) \]
converges to the PageRank vector 
\[ (\eye - \alpha \mP) \vx = (1-\alpha) \vv \]
as $t \to \infty$.
\end{theorem}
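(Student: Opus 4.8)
The plan is to specialize the closed-form solution of the dynamical system (established above) to the case $\vv(t) \equiv \vv$ and then let $t \to \infty$; most of this computation is already sketched in \secref{generalization}, and I would simply make it self-contained. First I would substitute the constant teleportation vector into
\[ \vx(t) = \exp[-(\eye - \alpha\mP)t]\,\vx(0) + (1-\alpha)\int_0^t \exp[-(\eye - \alpha\mP)(t-\tau)]\,\vv \, d\tau \]
and evaluate the convolution integral by the change of variables $\sigma = t - \tau$, which turns it into $\bigl(\int_0^t \exp[-(\eye - \alpha\mP)\sigma]\, d\sigma\bigr)\vv$. Here I use that $\eye - \alpha\mP$ is invertible: since $\mP$ is column-stochastic its spectral radius is $1$, so $\alpha\mP$ has spectral radius $\alpha < 1$ and $1$ is not one of its eigenvalues. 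Hence the integral equals $(\eye - \alpha\mP)^{-1}\bigl[\eye - \exp(-(\eye - \alpha\mP)t)\bigr]\vv$, the identity displayed just before the theorem.

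Next I would collect the two matrix-exponential terms. Writing $\vx = (1-\alpha)(\eye - \alpha\mP)^{-1}\vv$ for the static PageRank vector --- the unique solution of $(\eye - \alpha\mP)\vx = (1-\alpha)\vv$ --- this substitution yields
\[ \vx(t) = \exp[-(\eye - \alpha\mP)t]\,(\vx(0) - \vx) + \vx. \]
So the theorem reduces to a single claim: $\exp[-(\eye - \alpha\mP)t] \to \mathbf{0}$ as $t \to \infty$. Note this incidentally shows the limit is the same for every admissible initial condition $\vx(0)$.

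That last claim is the only real content, and it is the standard linear-stability fact. Every eigenvalue $\lambda$ of $\mP$ satisfies $|\lambda| \le 1$ (a column-stochastic matrix has spectral radius $1$), so every eigenvalue of $\eye - \alpha\mP$ equals $1 - \alpha\lambda$ for such a $\lambda$ and therefore has real part at least $1 - \alpha > 0$; equivalently, every eigenvalue $\mu$ of $-(\eye - \alpha\mP)$ satisfies $\Re \mu \le -(1-\alpha) < 0$. Passing to a Jordan decomposition of $-(\eye - \alpha\mP)$, each entry of $\exp[-(\eye - \alpha\mP)t]$ is a finite sum of terms $p(t)\, e^{\mu t}$ with $p$ a polynomial and $\Re \mu < 0$, each of which tends to $0$; hence the matrix exponential tends to the zero matrix, and in fact does so geometrically at a rate controlled by $1 - \alpha$. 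Substituting back into the displayed formula gives $\lim_{t\to\infty}\vx(t) = \vx$, which is the assertion of the theorem.

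I do not expect a genuine obstacle: the argument is the variation-of-constants computation from \secref{generalization} together with the elementary spectral bound for a column-stochastic $\mP$. The only point that calls for a little care is to avoid silently assuming that $\mP$ --- equivalently $\eye - \alpha\mP$ --- is diagonalizable; the Jordan-form estimate above covers defective cases, and alternatively one can argue via the Neumann series $(\eye - \alpha\mP)^{-1} = \sum_{k\ge 0}\alpha^k\mP^k$ to make the decay explicit without ever invoking eigenvectors.
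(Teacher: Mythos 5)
Your proposal is correct and follows essentially the same route as the paper: specialize the closed-form solution to constant $\vv$, rewrite it as $\vx(t) = \exp[-(\eye - \alpha\mP)t](\vx(0) - \vx) + \vx$, and conclude from the spectral properties of $-(\eye - \alpha\mP)$ that the exponential term vanishes as $t \to \infty$. Your extra care with the Jordan-form estimate (and with stating the eigenvalue condition as negative real part rather than the paper's looser ``less than $0$'') only makes explicit what the paper leaves terse, so there is no substantive difference in approach.
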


\subsection{Choosing the initial condition}

There are three natural choices for the initial condition $\vx(0)$.  The first choice is the uniform vector $\vx(0) = \frac{1}{n} \ve$.  The second choice is the initial teleportation vector $\vx(0) = \vv(0)$.  And the third choice is the solution of the PageRank problem for the initial teleportation vector $(\eye - \alpha \mP) \vx(0) = (1-\alpha) \vv(0)$.  We recommend either of the latter two choices in order to generalize the properties of PageRank.  Note that if $\vx(0)$ is chosen to solve the PageRank system for $\vv(0)$, then $\vx(t) = \vx$ for all $t$ is the solution of the PageRank dynamical system with constant teleportation~\thmrefp{generalization}.

\subsection{PageRank with fluctuating interest}

One of the advantages of the PageRank dynamical system is that we can study  problems analytically.  We now do so with the following teleportation function, or forcing function as it would be called in the dynamical systems literature:
\[ \vv(t) = \frac{1}{k} \sum_{j=1}^k \vv_j \Bigl(\cos (t + (j-1)\tfrac{2\pi}{k} ) + 1 \Bigr), \] 
where $\vv_j$ is a teleportation vector.  
Here, the idea is that $\vv_j$ represents the propensity of people to visit certain nodes at different times.  To be concrete, we might have $\vv_1$ correspond to news websites that are visited more frequently during the morning, $\vv_2$ correspond to websites visited at work, and $\vv_3$ correspond to websites visited during the evening.  
This function has all the required properties that we need to be a valid teleportation function.  With the risk of being overly formal, we'll state these as a lemma.

\begin{lemma}
Let $k \ge 2$.  Let $\vv_1, \ldots, \vv_k$ be probability distribution vectors. The time-dependent teleportation function 
\[ \vv(t) = \frac{1}{k} \sum_{j=1}^k \vv_j \Bigl(\cos (t + (j-1)\tfrac{2\pi}{k} ) + 1 \Bigr), \]
satisfies the both properties:
\begin{compactenum}
\item $ \vv(t) \ge 0 $ for all $t$, and
\item $ \sum_{i=1}^n v(t)_i = 1$ for all $t$
\end{compactenum}
\end{lemma}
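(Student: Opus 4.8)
The plan is to check the two claimed properties one at a time; each reduces to an elementary fact, so I do not anticipate a genuine obstacle. For nonnegativity I would argue componentwise. Fix a coordinate $i$ and a time $t$. The scalar coefficient $\cos(t + (j-1)\tfrac{2\pi}{k}) + 1$ lies in $[0,2]$ because cosine is bounded below by $-1$, and each entry $(\vv_j)_i$ is nonnegative by hypothesis. Hence every summand $\tfrac{1}{k}(\vv_j)_i\bigl(\cos(t + (j-1)\tfrac{2\pi}{k}) + 1\bigr)$ is nonnegative, and therefore so is their sum, which is exactly $v(t)_i$. Since $i$ and $t$ were arbitrary, $\vv(t) \ge 0$ for all $t$.

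For the normalization property I would interchange the two finite sums and use $\ve^T \vv_j = 1$ for each $j$:
\[ \sum_{i=1}^n v(t)_i \;=\; \frac{1}{k} \sum_{j=1}^k \Bigl( \cos\bigl(t + (j-1)\tfrac{2\pi}{k}\bigr) + 1 \Bigr)\sum_{i=1}^n (\vv_j)_i \;=\; \frac{1}{k} \sum_{j=1}^k \Bigl( \cos\bigl(t + (j-1)\tfrac{2\pi}{k}\bigr) + 1 \Bigr). \]
The constant terms contribute $\tfrac{1}{k}\cdot k = 1$, so it remains only to show that the cosine terms cancel, i.e., that $\sum_{j=1}^k \cos\bigl(t + (j-1)\tfrac{2\pi}{k}\bigr) = 0$ for every $t$ whenever $k\ge 2$.

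The one step that needs an argument is this trigonometric identity, which I would prove by passing to complex exponentials: $\sum_{j=1}^k \cos\bigl(t + (j-1)\tfrac{2\pi}{k}\bigr) = \Re\Bigl( e^{\imath t}\sum_{j=0}^{k-1} e^{\imath j 2\pi/k}\Bigr)$, and the inner geometric sum is the sum of all $k$th roots of unity. By the geometric-series formula it equals $\tfrac{e^{\imath 2\pi}-1}{e^{\imath 2\pi/k}-1} = 0$, where the hypothesis $k\ge 2$ is exactly what guarantees $e^{\imath 2\pi/k}\ne 1$ so that the denominator is nonzero. Taking real parts gives the claimed cancellation, and combining it with the previous display yields $\sum_{i=1}^n v(t)_i = 1$ for all $t$, completing the proof. (The bound $k\ge 2$ is genuinely needed: for $k=1$ one would have $\vv(t) = \vv_1(\cos t + 1)$, whose total mass oscillates rather than remaining at $1$.)
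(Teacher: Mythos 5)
Your proof is correct and follows essentially the same route as the paper: nonnegativity from $\cos \ge -1$ together with nonnegative entries, and normalization by reducing to the vanishing of $\sum_{j=1}^k \cos(t + (j-1)\tfrac{2\pi}{k})$ via complex exponentials and the $k$th roots of unity (you merely supply the geometric-series justification that the paper cites as a known fact). No gaps.
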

\begin{proof}
The first property follows directly because the minimum value of the cosine function is $-1$, and thus, $\vv(t)$ is always non-negative.
The second property is also straightforward.  Note that 
\[ \sum_{i=1}^n v(t)_i = 1 + \sum_{j=1}^k \cos(t+(j-1)\tfrac{2\pi}{k}) = 1 + \sum_{j=1}^k \Real*{  \exp\bigl(\imath t + \imath (j-1) \tfrac{2\pi}{k} \bigr)}. \]
Let $r_j(t) = \exp(\imath t + \imath (j-1) \tfrac{2\pi}{k})$.  For $t=0$, these terms express the $k$ roots of unity. For any other $t$, we simply rotate these roots.  Thus we have $\sum_j r_j(t) = 0$ for any $t$ because the sum of the roots of unity is $0$ if $k \ge 2$.  The second property now follows because the sum of the real component is still zero.
\end{proof}

For this function, we can solve for the steady-state solution analytically.
\begin{lemma} \label{lem:fluctuate}
Let $k \ge 2$, $0 \le \alpha < 1$, $\mP$ be column-stochastic, $\vv_1, \ldots, \vv_k$ be probability distribution vectors, and \[ \vv(t) = \frac{1}{k} \sum_{j=1}^k \vv_j \Bigl(\cos (t + (j-1)\tfrac{2\pi}{k} ) + 1 \Bigr) = \frac{1}{k} \mV \cos(t + \vf) + \frac{1}{k} \mV \ve, \]
where $\mV = \bmat{ \vv_1, \ldots, \vv_k }$ and $f_j = (j-1) \tfrac{2\pi}{k} \quad j=1, \ldots, k$. Then the steady state solution of 
\[ \vx'(t) = (1-\alpha) \vv(t) - (\eye - \alpha \mP) \vx(t) \]
is 
\[ \vx(t) = \vx + \Real{\vs \exp( \imath t )} \]
where $\vx$ is the solution of the static PageRank problem 
\[ (\eye - \alpha \mP) \vx = (1-\alpha) \frac{1}{k} \mV \ve \]
and $\vs$ is the solution of the static PageRank problem with \emph{complex teleportation} 
\[ (\eye - \tfrac{\alpha}{1+\imath} \mP) \vs = (1-\alpha) \tfrac{1}{k(1+\imath)} \mV \exp(\imath \vf). \]
\end{lemma}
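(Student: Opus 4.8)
The plan is to use the linearity of the dynamical system \eqref{eq:pr-dynamical} together with the superposition principle: I would split the forcing $(1-\alpha)\vv(t) = (1-\alpha)\tfrac{1}{k}\mV\ve + (1-\alpha)\tfrac{1}{k}\mV\cos(t+\vf)$ into its time-constant part and its oscillatory part, find the steady-state response to each separately, and add them.

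For the constant part $(1-\alpha)\tfrac{1}{k}\mV\ve$, \Thmref{generalization} immediately gives that the corresponding solution converges to the static PageRank vector $\vx$ with $(\eye-\alpha\mP)\vx = (1-\alpha)\tfrac{1}{k}\mV\ve$ -- exactly the $\vx$ in the statement. For the oscillatory part, I would write each scalar as $\cos(t+f_j) = \Real{\exp(\imath t)\exp(\imath f_j)}$, so that $\tfrac{1}{k}\mV\cos(t+\vf) = \Real{\tfrac{1}{k}\mV\exp(\imath\vf)\exp(\imath t)}$. Since $\eye-\alpha\mP$ is a real matrix, $\Real{\cdot}$ commutes with it, so if I replace the forcing by the complex exponential $(1-\alpha)\bigl(\tfrac{1}{k}\mV\ve + \tfrac{1}{k}\mV\exp(\imath\vf)\exp(\imath t)\bigr)$, solve that complexified problem, and then take real parts, I recover a solution of the original problem. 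For the complexified problem I would use the ansatz $\vx(t) = \vx + \vs\exp(\imath t)$: substituting, the non-oscillatory terms reproduce the defining equation of $\vx$, and matching the $\exp(\imath t)$ terms (after cancelling the common factor) gives $\imath\vs = (1-\alpha)\tfrac{1}{k}\mV\exp(\imath\vf) - (\eye-\alpha\mP)\vs$, i.e. $\bigl((1+\imath)\eye - \alpha\mP\bigr)\vs = (1-\alpha)\tfrac{1}{k}\mV\exp(\imath\vf)$; dividing by $1+\imath$ yields precisely the complex-teleportation PageRank system in the statement. I would also record that this system is nonsingular: $\mP$ column-stochastic gives $\rho(\mP)=1$, so the spectral radius of $\tfrac{\alpha}{1+\imath}\mP$ is at most $\tfrac{\alpha}{|1+\imath|} = \tfrac{\alpha}{\sqrt{2}} < 1$, hence $\eye - \tfrac{\alpha}{1+\imath}\mP$ is invertible and $\vs$ is uniquely determined.

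To finish, I would argue uniqueness of the steady state: any solution of \eqref{eq:pr-dynamical} differs from $\vx + \Real{\vs\exp(\imath t)}$ only by a homogeneous term $\exp[-(\eye-\alpha\mP)t]\,\vc$, and, as already observed in \secref{generalization}, every eigenvalue of $\eye-\alpha\mP$ has real part at least $1-\alpha>0$, so this transient decays to $0$. Therefore $\vx + \Real{\vs\exp(\imath t)}$ is the steady-state solution irrespective of $\vx(0)$.

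I do not expect a real obstacle here: the only genuine idea is that a real sinusoidal forcing is cleanest to resolve by complexifying and solving a PageRank system with the shifted damping factor $\alpha/(1+\imath)$, then taking the real part. The one step worth double-checking is that this shift keeps us in the well-posed regime -- that $\alpha/(1+\imath)$ lies where $\eye - z\mP$ remains invertible -- which holds since $|1+\imath| = \sqrt{2} > \alpha$; the rest (that $\Real{\cdot}$ commutes with the real operator, and that the ansatz captures the full steady state) is routine.
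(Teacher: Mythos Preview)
Your proposal is correct and follows essentially the same route as the paper: split off the static PageRank center $\vx$, complexify the remaining sinusoidal forcing, use the phasor ansatz $\vs\exp(\imath t)$, cancel $\exp(\imath t)$ to obtain the complex-teleportation system, and verify well-posedness via $|\alpha/(1+\imath)|=\alpha/\sqrt{2}<1$. Your added paragraph on uniqueness of the steady state (the homogeneous term $\exp[-(\eye-\alpha\mP)t]\vc$ decays since every eigenvalue of $\eye-\alpha\mP$ has positive real part) is a welcome clarification that the paper leaves implicit.
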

\begin{proof}
This proof is mostly a derivation of the expression for the solution by guessing the form.  First note that if 
\[ \vx(t) = \vx + \vy(t) \]
then 
\[ \vy'(t) = (1-\alpha) \tfrac{1}{k} \mV \cos(t + \vf) - (\eye - \alpha \mP) \vy(t). \]
That is, we've removed the constant term from the teleportation function by looking at solutions centered around the static PageRank solution.  To find the steady-state solution, we look at the complex-phasor problem: 
\[ \vz'(t) = (1-\alpha) \tfrac{1}{k} \mV \exp(\imath t + \imath \vf) - (\eye - \alpha \mP) \vz(t) \]
where $\vy(t) = \Real{\vz(t)}$.  Suppose that $\vz(t) = \vs \exp(\imath t)$.  Then: 
\[ \vz'(t) = \imath \vs \exp(\imath t) = (1-\alpha) \tfrac{1}{n} \exp(\imath t) \exp(\imath \vf) - (\eye - \alpha \mP) \vs \exp(\imath t). \]

The statement of $\vs$ in the theorem is exactly the solution after canceling the phasor $\exp(\imath t)$.
We now have to show that this solution is well-defined.  PageRank with a complex teleportation parameter $\gamma$ exists for any column-stochastic $\mP$ if $|\gamma| < 1$ (see \citet{horn2008-parametric-google,constantine2010-rapr}).  For the problem defining $\vs$, $\gamma = \alpha/(1+\imath)$ and $|\gamma| = \alpha/\sqrt{2}$.  Thus, such a vector $\vs$ always exists.
\end{proof}

We conclude with an example of this theorem.  Consider a four node graph with adjacency matrix and transition matrix: 
\[ \mA = \sbmat{     0    & 0    & 1    & 0 \\
    0    & 0    & 1    & 0 \\
    0    & 1    & 0    & 1 \\
    1    & 1    & 0    & 0} \qquad \text{ and } \qquad
\mP = \sbmat{ 0 & 0 & 0 & 0.5 \\ 0 & 0 & 0.5 & 0.5 \\ 1 & 1 & 0 & 0 \\ 0 & 0 & 0.5 & 0}. \]
Let $\vv_j = \ve_j$ for $j = 1, \ldots, 4$.  That is, interest oscillates between all four nodes in the graph in a regular fashion.  We show the evolution of the dynamical system for $20$ time-units in Figure~\ref{fig:cos-ff}.  This evolution quickly converges to the oscillators predicted by the lemma.  In the interest of simplifying the plot, we do not show the exact curves as they are visually indistinguishable from those plotted for $t \ge 4$.  By solving the complex valued PageRank to compute $\vs$, we can compute the magnitude of the fluctuation: 
\[ |\vs| = \bmat{    0.0216 & 0.0261 & 0.0122& 0.0235}^T. \]
This vector accurately captures the magnitude of these fluctuations.

%\begin{SCfigure}[leftcaption][h!]
\begin{SCfigure}[\sidecaptionrelwidth][h!]
\caption{The dashed lines represent the average PageRank vector computed for the teleportation vectors.  The curves show the evolution of the PageRank dynamic system for this example of teleportation.  We see that the dynamic PageRanks fluctuate about their average PageRank vectors.  Lemma~\ref{lem:fluctuate} predicts the magnitude of the fluctuation.}
\includegraphics[height=1.6in]{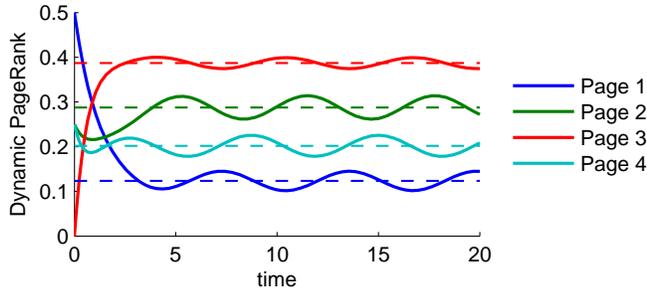}
\label{fig:cos-ff}
\end{SCfigure}

\subsection{Ranking from Time-Series} \label{sec:diff-ranking} \label{sec:rank-to-time}
The above equations provide a time-series of dynamic PageRank vectors for the nodes, denoted formally as $\vx(t), 0 \le t \le t_{\max}$.  Applications, however, often want a single score, or small set of scores, to characterize sets of interesting nodes.  There are a few ways in which these time series give rise to scores. 
Many of these methods were explained by \citet{o2005eventrank} in the context of ranking sequences of vectors.  Having a variety of different scores derived from the same data frequently helps when using these scores as features in a prediction or learning task~\cite{becchetti2008-spam,constantine2010-rapr}.

\paragraph{Transient Rank.} We call the instantaneous values of $\vx(t)$ a node's \emph{transient} rank.  This score gives the importance of a node at a particular time.

\paragraph{Summary, Variance, \textit{\&} Cumulative Rank.}  Any summary function $s$ of the time series, such as the integral, average, minimum, maximum, or variance, is a single score that encompasses the entire interval $[0,t_{\max}]$.  We utilize the \emph{cumulative rank, $\vc$} and \emph{variance rank, $\vr$} in the forthcoming experiments: 
\[ \vc = \int_0^{t_{\max}} \vx(t) \, dt \qquad \text{and} \qquad \vr = \int_0^{t_{\max}} (\vx(t) - \tfrac{1}{t_{\max}} \vc )^2 \, dt. \]

\paragraph{Difference Rank.} A node's difference rank is the difference between its maximum and minimum rank over all time, or a limited time window: 
\[ \vd = \max_t[\vx(t)] - \min_t[\vx(t)] \qquad \vd_W = \max_{t \in W} \vx(t) - \min_{t \in W} \vx(t). \]
Nodes with high difference rank should reflect important events that occurred within the range $[0, t_{\max}]$ or the time window $W$.  We suggest using a window $W$ that omits the initial convergence region of the evolution.  In the context of Figure~\ref{fig:cos-ff}, we'd set $W$ to be $[4,20]$ to approximate the vector $|\vs|$ numerically.  In Section~\ref{sec:results} and \figref{evolving-dpr-ts4}, we see examples of how current news stories arise as articles with high difference rank.

\subsection{Modeling activity}
\label{sec:time-scale}

In the next two sections of our introduction to the dynamic teleportation PageRank model, we discuss how to incorporate empirically measured activity into the model.  Let $\vp_1, \ldots, \vp_k$ be $k$ observed vectors of activity for a website.  In the cases we examine below, these activity vectors measure page views per hour on Wikipedia and the number of tweets per month on Twitter. We normalize each of them into teleportation distributions, and conceptually think of the collection of vectors as a matrix 
\[ \vv_1, \ldots, \vv_k \quad \to \quad \mV = \bmat{\vv_1, \ldots, \vv_k}. \]
Let $\ve(i)$ be a functional form representing the vector $\ve_i$. 
The time-dependent teleportation vector we create from this data is: 
\[ \vv(t) = \mV \ve(\text{floor}\BracesOf{t}+1) = \vv_{\text{floor}\BracesOf{t}+1}. \]
For this choice, the time-units of our dynamical system are given by the time-unit of the original measurements. Other choices are possible too.  Consider: 
\[ \vv_s(t) = \mV \ve(\text{floor}\BracesOf{t/s}+1) = \vv_{\text{floor}\BracesOf{t/s}+1}. \]
If $s > 1$, then time in the dynamical system slows down.  If $s < 1$, then time accelerates. Thus, we call $s$ the time-scale of the system.  Note that 
\[ \vx(sj), \qquad j = 0, 1, \ldots \]
represents the same effective time-point for any time-scale.  Thus, when we wish to compare different time-scales $s$, we examine the solution at such scaled points.

In the experimental evaluation, the parameter $s$ plays an important role.  We illustrate its effect in Figure~\ref{fig:param-examples}(a) for a small subnetwork extracted from Wikipedia. As we discuss further in \secref{methods}, for large values of $s$, then $\vv(t)$ looks constant for long periods of time, and hence $\vx(t)$ begins to converge to the PageRank vector for the current, and effectively static, teleportation vector.  Thus, we also plot the converged PageRank vectors as a step function.  We see that as $s$ increases, the lines converge to these step functions, but for $s=1$ and $s=2$, they behave differently.

\begin{figure*}[t!]
\centering
    \subfigure[timescale $s$]
    	{\label{fig:v1-ts}\includegraphics[width=0.5\linewidth]{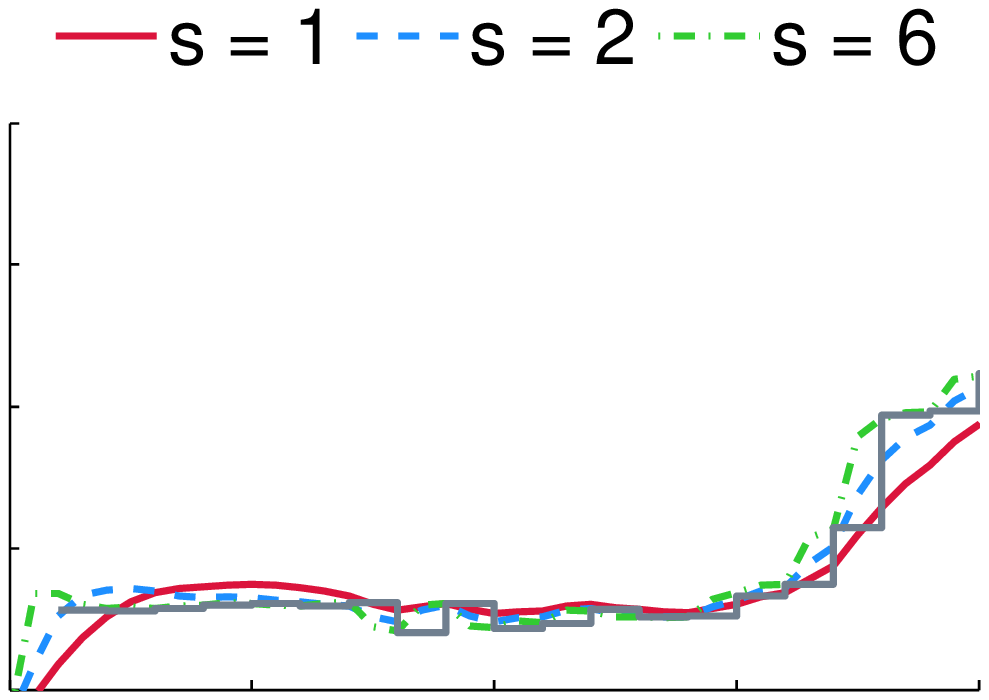}}%
    \subfigure[smoothing $\theta$]
    	{\label{fig:clusters}\includegraphics[width=0.5\linewidth]{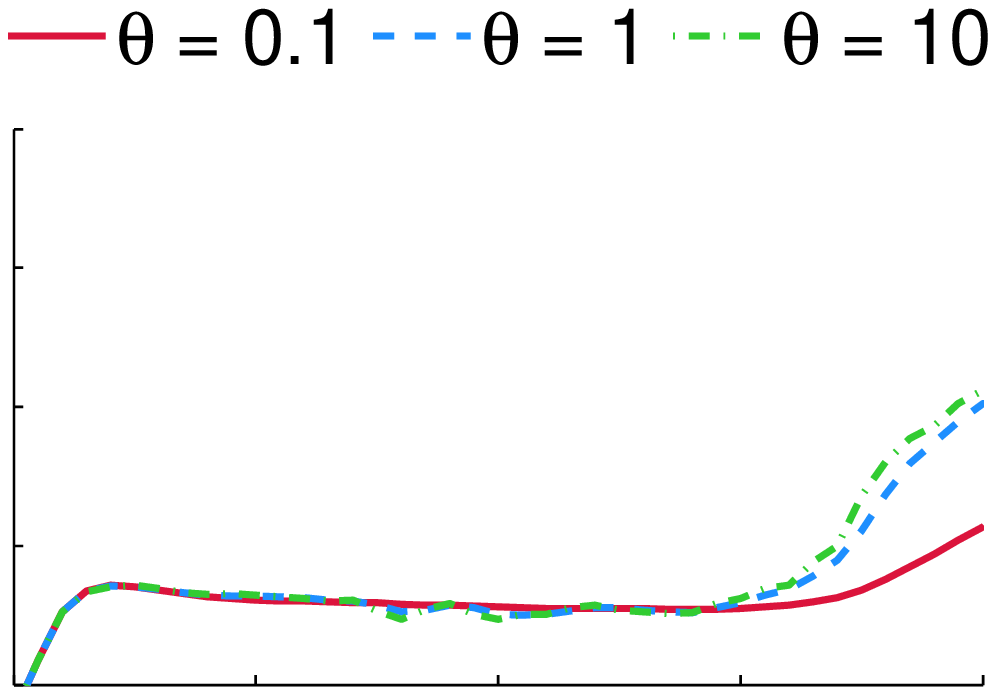}}
    \subfigure[damping parameter $\alpha$]
    	{\label{fig:centroid}\includegraphics[width=0.5\linewidth]{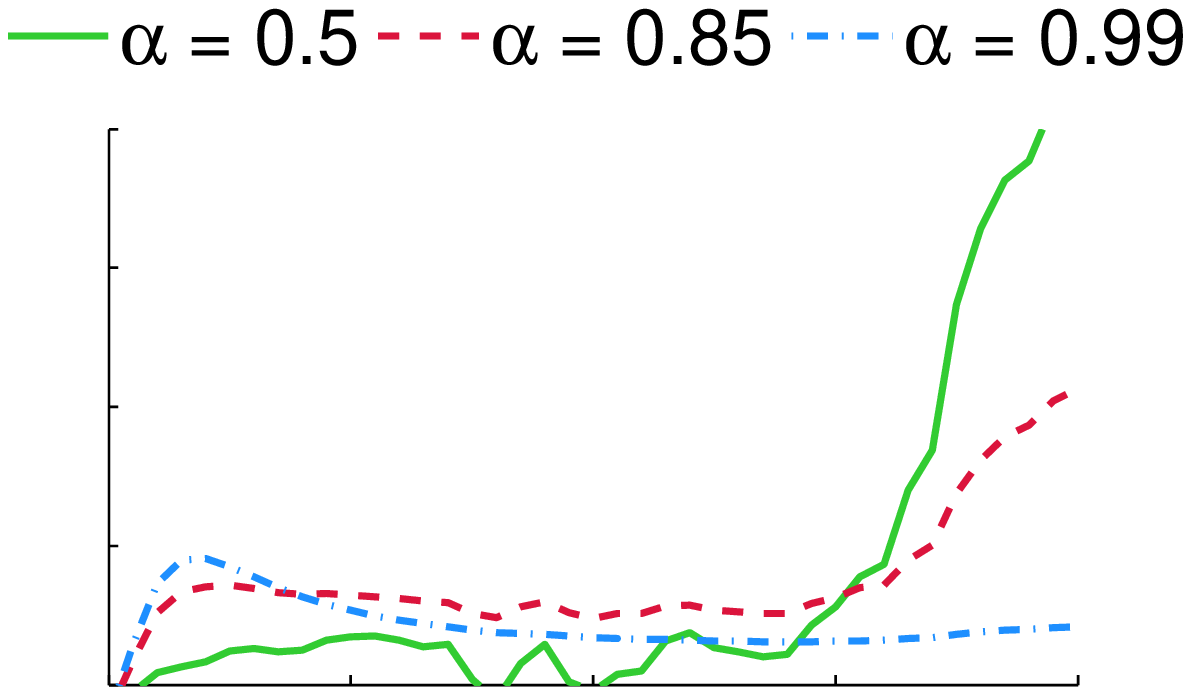}}
   \caption{The evolution of PageRank values for one node due to the dynamical teleportation.  The horizontal axis is time $[0,20]$, and the vertical axis runs between [0.01,0.014]. In figure (a), $\alpha=0.85$, and we vary the time-scale parameter \secrefp{time-scale} with no smoothing.  The solid dark line corresponds to the step function of solving PageRank exactly at each change in the teleportation vector.  All samples are taken from the same effective time-points as discussed in the section.  In figure (b), we vary the smoothing \secrefp{smoothing} of the teleportation vectors with $s = 2$, and $\alpha=0.85$.  In figure (c), we vary $\alpha$ with $s=2$ and no smoothing. We used the \texttt{ode45} function in Matlab, a Runge-Kutta method, to evolve the system.}
  \label{fig:param-examples}
\end{figure*}

\subsection{Smoothing empirical activity}
\label{sec:smoothing}

So far, we defined a time-dependent that $\vv(t)$ changes at fixed intervals based on empirically measured data.  A better idea is to smooth out these ``jumps'' using an exponentially weighted moving average. As a continuous time function, this yields:
\[ \vvbar'(t;\theta) = \theta \vv(t) - \theta \vvbar(t;\theta). \]
To understand why this smooths the sequence, consider an implicit Euler approximation: 
\[ \vvbar(t) = \frac{1}{1+h\theta} \vvbar(t-h;\theta) + \frac{h\theta}{1+h\theta} \vv(t).\]
This update can be written more simply as: 
\[ \vvbar(t;\theta) = \underbrace{\gamma \vv(t)}_{\text{new data}} + \underbrace{(1-\gamma)\vvbar(t-h;\theta)}_{\text{old data}}, \]
where $\gamma = \frac{h\theta}{1+h\theta}$.
When $\vv(t)$ changes at fixed intervals, then $\vvbar(t;\theta)$ will slowly change.  If $\theta$ is small then $\vvbar(t;\theta)$ changes slowly. We recover the ``jump'' changes in $\vv(t)$ in the limit $\theta \to \infty$.

The effect of $\theta$ is shown in Figure~\ref{fig:param-examples}(b).  Note that we quickly recover behavior that is effectively the same as using jumps in $\vv(t)$ ($\theta = 1, 10$).  So we only expect changes with smoothing for $\theta < 1$.

\subsection{Choosing the teleportation factor}
Picking $\alpha$ even for static PageRank problems is challenging, see~\citet{gleich2010tracking}~and~\citet{constantine2010-rapr} for some discussion.  In this manuscript, we do not perform any systematic study of the effects of $\alpha$ beyond Figure~\ref{fig:param-examples}(c).  This simple experiment shows one surprising feature.  Common wisdom for choosing $\alpha$ in the static case suggests that as $\alpha$ approaches 1, the vector becomes more sensitive.  For the dynamic teleportation setting, however, the opposite is true.  Small values of $\alpha$ produce solutions that more closely reflect the teleportation vector -- the quantity that is changing -- whereas large values of $\alpha$ reflect the graph structure, which is invariant with time.  Hence, with dynamic teleportation, using a small value of $\alpha$ is the sensitive setting.  Note that this observation is a straightforward conclusion from the equations of the dynamic vector: 
\[ \vx'(t) = (1-\alpha) \vv(t) + \alpha \mP \vx(t) - \vx(t) \]
so $\alpha$ small implies a larger change due to $\vv(t)$.  Nevertheless, we found it surprising in light of the existing literature.

%%%%%%%%%%%%%%%%%%%%%%%%%%%
%Methods for Dynamic PageRank
%%%%%%%%%%%%%%%%%%%%%%%%%%%
\section{Methods for dynamic PageRank}
\label{sec:methods}

In order to compute the time-sequence of PageRank values $\vx(t)$, we can evolve the dynamical system \eqref{eq:pr-dynamical} using any standard method -- usually called an integrator. We discuss both the forward Euler method and a Runge-Kutta method next.  Both methods, and indeed, the vast majority of dynamical system integrators only require a means to evaluate the derivative of the system at a time $t$ given $\vx(t)$.  For PageRank with dynamic teleportation, this corresponds to computing: 
\[ \vx'(t) = (1-\alpha) \vv(t) - (\eye - \alpha \mP) \vx(t). \]
The dominant cost in evaluating $\vx'(t)$ is the matrix vector product $\mP \vx$.
For the explicit methods we explore, all of the other work is linear in the number of nodes, and hence, these methods easily scale to large networks.  
Both of these methods may also be used in a distributed setting if a distributed matrix-vector product is available.  

\subsection{Forward Euler}
%todo: analysis/discussion: show algorithm, discuss complexity, and adv/disadv\\
We first discuss the forward Euler method.
%demonstrate that by reinterpreting show that 
%Striking resemblance
%At the moment, we only use the forward Euler method for simplicity.  
This method lacks high accuracy, but is fast and straightforward.  
Forward Euler approximates the derivative with a first order Taylor approximation:
\[ \vx'(t) \approx \frac{\vx(t+h) - \vx(t)}{h}, \]
and then uses that approximation to estimate the value at a short time-step in the future: 
\[ \vx(t+h) = \vx(t) + h \left[ (1-\alpha) \vv(t) - (\eye - \alpha \mP) \vx(t) \right] . \]
This update is the original Richardson iteration with $h=\omega$.  We present the forward Euler method as a formal algorithm in Figure~\ref{fig:alg-dynpr} in order to highlight a comparison with the power and Richardson method.
That is, the forward Euler method is simply running a power method, but changing the vector $\vv$ at every iteration.  However, we derived this method based on evolving~\eqref{eq:pr-dynamical}.  Thus, by studying the relationship between~\eqref{eq:pr-dynamical} and the algorithm in Figure~\ref{fig:alg-dynpr}, we can understand the underlying problem solved by  changing the teleportation
vector while running the power method.  

\paragraph{Long time-scales.}
Using the forward Euler method, we can analyze the situation with a large time-scale parameter $s$.  Consider an arbitrary $\vx(0)$, $\alpha = 0.85$, $s=100$, $h=1$, and no smoothing.  In this case, then the forward Euler method will run the Richardson iteration for $100$ times before observing the change in $\vv(t)$ at $t=100$.   The difference between $\vx(k)$ and the exact PageRank solution for this temporarily static $\vv(t)$ is $\|\vx(k) - \vx\|_1 \le 2 \alpha^k$.
For $k > 50$, this difference is small. Thus, a large $s$ and no smoothing corresponds to solving the PageRank problem for each change in $\vv$.  

\paragraph{Stability.}
The forward Euler method with timestep $h$ is stable if the eigenvalues of the matrix $-h(\eye-\alpha \mP)$ are within distance $1$ of the point $-1$.  The eigenvalues of $\mP$ are all between $-1$ and $1$ because it is a stochastic matrix, and so this is stable for any $h < \frac{2}{1+\alpha}$.

\begin{figure}[t]
\centering
\begin{algorithmic}
 \REQUIRE ~\newline
  a graph $G=(V,E)$ and a procedure to compute $\mP \vx$ for this graph \newline
  a maximum time $t_{\max}$ \newline
  a function to return $\vv(t)$ for any $0 \le t \le t_{\max}$ \newline
  a damping parameter $\alpha$ \newline
  a time-step $h$
 \ENSURE $\mX$ where the $k$th column of $\mX$ is $\vx(0 + kh)$ 
  for all $1 \le k \le t_{\max}/h$ (or any
  desired subset of these values) \newline
 \STATE $t \leftarrow 0$; $k = 1$
 \STATE $\vx(0) \leftarrow \vv(0)$ (or any other desired initial condition)
 \WHILE {$t \le t_{\max}-h$}
   \STATE $\vx(t+h) \leftarrow \vx(t) + h \left[ (1-\alpha) \vv(t) - (\eye - \alpha \mP) \vx(t) \right] $
   \STATE $\mX(:,k) \leftarrow \vx(t+h)$
   \STATE $t \leftarrow t + h$; $k \leftarrow k + 1$
 \ENDWHILE  
\end{algorithmic}
\caption{The forward Euler method for evolving the dynamical system: $\vx'(t) = (1-\alpha) \vv(t) - (\eye - \alpha \mP) \vx(t)$. The resulting procedure looks remarkably similar to the standard Richardson iteration to compute a PageRank vector.  One key difference is that there is no notion of convergence.}\label{fig:alg-dynpr}
\end{figure}

\subsection{Runge-Kutta}
%analysis/discussion: show algorithm, discuss complexity, and adv/disadv

Runge-Kutta~\cite{runge1895numerische,kutta1901beitrag} numerical schemes are some of the most well-known and most used.
They achieve far greater accuracy than the simple forward Euler method, at the expense of a greater number of evaluations of the function $\vx'(t)$ at each step.
We use the implementations of Runge-Kutta methods available in the Matlab ODE suite~\cite{Shampine-1997-ODEs}.  The step-size is adapted automatically based on a local error estimate, and the solution can be evaluated at any desired point in time.  The stability region for Runge-Kutta includes the region for forward Euler, so these methods are stable.  These methods are also fast.  To integrate the system for Wikipedia with over 4 million vertices and 60 million edges, it took between 300-600 seconds, depending on the parameters.

\subsection{Maintaining interpretability}

Based on the theory of the dynamic teleportation system, we expect that $\vx(t) \ge 0$ and $\ve^T \vx(t) = 1$ for all time.  Although this property should be true of the computed solution, we often find that the sum diverges from one.  Consequently, for our experiments, we include a correcting term: 
\[ \vx'(t) = (1-\alpha) \vv(t) - (\gamma \eye - \alpha \mP) \vx(t) \]
where $\gamma =  (1-\alpha)\ve^T \vv(t) + \alpha \ve^T \vx(t)$.
Note that $\gamma = 1$ if the $\vx(t)$ has sum exactly $1$.  If $\ve^T \vx(t)$ is slightly different from one, then the correction with $\gamma$ ensures that $\ve^T \vx'(t) = 0$ numerically.  
Similar issues arise in computing static PageRank~\cite{wills2008-ordinal}, although the additional computation in the Runge-Kutta methods exacerbates the problem.

\section{Related work}\label{sec:related-work}
Note that we previously studied this idea in a conference
paper~\cite{Rossi-2012-dynamicpr}.  These ideas
have been significantly refined for this 
manuscript.

The relationship
between dynamical systems and classical iterative methods
has been utilized by~\citet{Embree-2009-dynamical}
to study eigenvalue solvers.  It was also noted in
an early paper by~\citet{Tsaparas-2004-dynamical-systems}
that there is a relationship between the PageRank and
HITS algorithms and dynamical systems.

In the past, others studied PageRank approximations
on graph streams~\cite{das2008estimating}.  More recently,
\citet{bahmani2012pagerank} studied how accurately
an evolving PageRank method could estimate the true
PageRank of an evolving graph that is accessed only
via a crawler.  The method used here solved each 
PageRank problem exactly for the current 
estimate of the underlying graph. A 
detailed study of how PageRank values evolve during
a web-crawl was done by \citet{boldi2005-incremental-pagerank}. 
In the future, we
plan to study dynamic graphs via similar ideas.

As explained in \secref{methods} and \figref{alg-dynpr}, 
our proposed method is related to
changing the teleportation vector in the power method
as its being computed.  Bianchini et 
al.~\cite{bianchini2005-inside-pagerank} noted
that the power method would still converge if
either the graph or the vector $\vv$ changed during
the method, albeit to a new solution given
by the new vector or graph.  Our method capitalizes
on a closely related idea and we utilize the 
intermediate quantities explicitly.  Another related
idea is the Online Page Importance Computation 
(OPIC)~\cite{abiteboul2003adaptive}, which integrates
a PageRank-like computation \emph{with} a crawling
process.  The method does nothing special if
a node has changed when it is crawled again.

While we described PageRank in terms of a random-surfer
model, another characterization of PageRank is that
it is a sum of damped transitions: 
\[ \vx = (1-\alpha) \sum_{k=0}^\infty (\alpha \mP)^k \vv. \]
These transitions are a type of probabilistic walk and 
Grindrod et al.~\cite{grindrod2011communicability}
introduced the related notion of dynamic walks for
dynamic graphs.  We can interpret these dynamic walks
as a \emph{backward Euler} approximation to the 
dynamical system: 
\[ \vx'(t) = \alpha \mA(t) \vx(t) \qquad \vx(0) = \ve \]
with time-step $h=1$ and $\mA$ is a time-dependent
adjacency matrix.  This relationship suggests that there may be
a range of interesting models between our dynamic teleportation
model and existing evolving graph models.

%Non-Web....
Outside
of the context of web-ranking, 
O'Madadhain and Smyth propose EventRank~\cite{o2005eventrank},
a method of ranking nodes in dynamic graphs, that
uses the PageRank propagation equations for a sequence of graphs.
We utilize the same idea but place it within the context
of a continuous dynamical system.  
In the context of popularity dynamics~\cite{ratkiewicz2010characterizing}, our method captures how changes in external interest influence the popularity of nodes and the nodes linked to these nodes in an implicit fashion.
Our work is also related to modeling human dynamics, namely, how humans change their behavior when exposed to rapidly changing or unfamiliar conditions~\cite{bagrow2011collective}. 
In one instance, our method shows the important topics and ideas relevant to humans before and after one of the largest Australian Earthquakes \figrefp{evolving-dpr-ts4}.

In closing, we wish to note that our proposed method \emph{does not involve}
updating the PageRank vector, a related problem which has
received considerable attention~\cite{chien2004link,langville04-iad}.
Nor is it related to tensor methods for dynamic graph
data~\cite{Sun-2006-tensor,Dunlavy-2011-temporal}.

%todo: cite the KDD2012 paper: pagerank on an evolving graph

%%%%%%%%%%%%%%%%%%%%%
%DATASETS & RANKINGS
%%%%%%%%%%%%%%%%%%%%%
\section{Examples of dynamic teleportation}
\label{sec:eval}
We now use dynamic teleportation to investigate page view patterns on Wikipedia and user activity on Twitter.  In the following experiments, unless otherwise noted, we set $s=1$, $\alpha = 0.85$, do not use smoothing (``$\theta = \infty$''), and use the \texttt{ode45} method from Matlab to evolve the system.  We study this model on two datasets.

%%%%%%%%%%%%%%%%%%
%DATASETS
%%%%%%%%%%%%%%%%%%
\subsection{Datasets}\label{sec:datasets}
We provide some basic statistics of the Wikipedia and Twitter datasets in Table~\ref{table:dataset}.  For Wikipedia, the time unit for $s=1$ is an hour, and for Twitter, it is one month.  

\paragraph{Wikipedia Article Graph and Hourly page views.} Wikipedia provides access to copies of its database~\cite{wikipedia2009}. We downloaded a copy of its database on March 6th, 2009 and extracted an article-by-article link graph, where an article is a page in the main Wikipedia namespace, a category page, or a portal page. All other pages and links were removed. See~\citet{gleich2007three} for more information.

Wikipedia also provides hourly page views for each page~\cite{pageviews2009}. These are the number of times a page was viewed for a given hour. These are not unique visits. We downloaded the raw page counts and matched the corresponding page counts to the pages in the Wikipedia graph. We used the page counts starting from March 6, 2009 and moving forward in time.  Although it would seem like measuring page views would correspond to measuring $\vx(t)$ instead of $\vv(t)$, one of our earlier studies showed that users hardly ever follow links on Wikipedia~\cite{gleich2010tracking}.  Thus, we can interpret these page views as a reasonable measure of external interest in Wikipedia pages.

\paragraph{Twitter Social Network and Monthly Tweet Rates.}  We use a follower graph generated by starting with a few seed users and crawling follows links from 2008. We extract the user tweets over time from $2008 - 2009$. A tweet is represented as a tuple $\langle$user, time, tweet$\rangle$. Using the set of tweets, we construct a sequence of vectors to represents the number of tweets for a given month.

\begin{table}[h!]
\caption{Dataset Properties. The page views or tweets is denoted as $\vp$.}
\label{table:dataset}
\centering\small
\begin{tabularx}{\linewidth}{ l XX r@{\;\;} X XX } 
\toprule
Dataset &  Nodes & Edges & $t_{\max}$ & Period & Average $p_i$ & Max $p_i$  \\
\midrule
\textsc{wikipedia}  & 4,143,840   &  72,718,664  &  48  &  hours  &  1.4243  &  353,799\\
\textsc{twitter} & 465,022  & 835,424 & 6 &  months &  0.5569  & 1056  \\
\bottomrule
\end{tabularx}
\end{table}

\def \img_lsbig{2.6in}
\begin{figure*}[t!]
\centering
    \subfigure[In-degree]
    	{\label{fig:isim-indegree}\includegraphics[width=\img_lsbig]{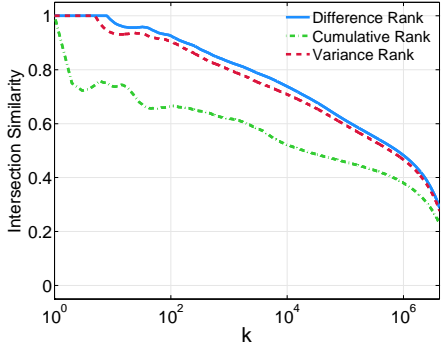}}
    \subfigure[Static PageRank (Uniform)]
    	{\label{fig:isim-pr-uniform}\includegraphics[width=\img_lsbig]{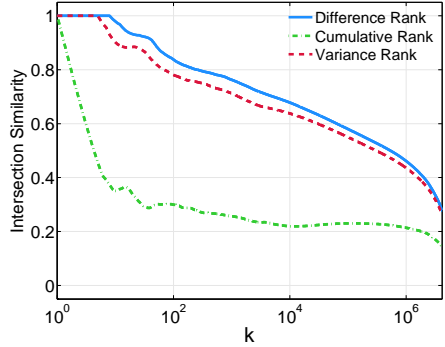}}
    \subfigure[Average page views]
    	{\label{fig:isim-avgpv}\includegraphics[width=\img_lsbig]{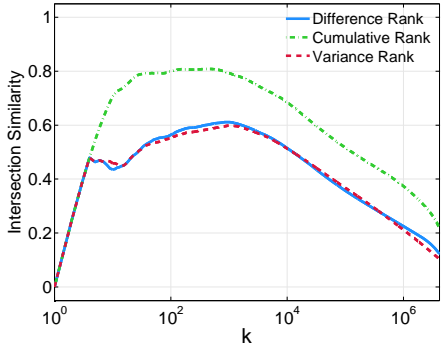}}
    \subfigure[Static PageRank (avg.~page views)]
    	{\label{fig:isim-pr-avgpv}\includegraphics[width=\img_lsbig]{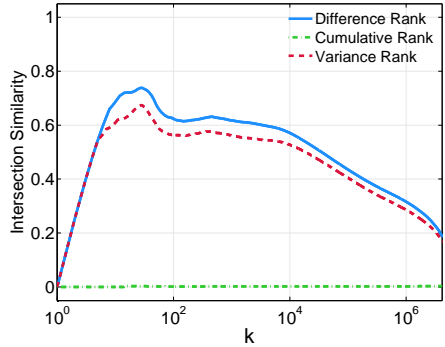}}
\caption{
Intersection similarity of rankings derived from dynamic PageRank. We compute the intersection similarity of the difference, variance, and cumulative rankings given by dynamic PageRank and compare these with the rankings given by the in-degree, average page views, static PageRank with uniform teleportation, and static PageRank with average page views as the teleportation vector.
For dynamic PageRank, we set the initial value $\vx(0)$ to be the solution of the static PageRank system which uses $\vv(0)$ as the teleportation vector. }
%we also use 48 hours, should just state that unless otherwise noted, we use wiki-48hours..
  \label{fig:isim}
\end{figure*}

\subsection{Rankings from transient scores}\label{sec:rank-ts}
First, we evaluate the rankings from dynamic PageRank using the intersection similarity measure~\cite{Boldi05totalrank}.
%The rankings from Dynamic PageRank are evaluated using the intersection similarity measure~\cite{Boldi05totalrank}.
%We first use the intersection similarity measure to evaluate the rankings~\cite{Boldi05totalrank}. 
Given two vectors $\vx$ and $\vy$, the intersection similarity metric at $k$ is the average symmetric difference over the top-$j$ sets for each $j \leq k$. If $\mathcal{X}_k$ and $\mathcal{Y}_k$ are the top-$k$ sets for $\vx$ and $\vy$, then
\[ \mathrm{isim}_k(\vx,\vy) = \frac{1}{k} \sum_{j=1}^{k} \frac{|\mathcal{X}_j \Delta \mathcal{Y}_j|}{2j} \] 
\noindent 
where $\Delta$ is the symmetric set-difference operation. Identical vectors have an intersection similarity of 0.

For the Wikipedia graph, Figure~\ref{fig:isim} shows the similarity profile comparing a few ranking measures from dynamic PageRank to reasonable baselines.
In particular, we compare ${\vd}$, $\vr$, $\vc$  (from \S\ref{sec:diff-ranking}) to indegree, average page views, static PageRank with uniform teleportation, and static PageRank using average page views as the teleportation vector.
The results suggest that dynamic PageRank is different from the other measures, even for small values of $k$.  
In particular, combining the external influence with the graph appears to produce something new.
The only exception is in Fig.~\ref{fig:isim-pr-avgpv} where the cumulative rank is shown to give a similar ordering to static PageRank using average page views as the teleportation.

\subsection{Difference ranks} \label{sec:patterns}
\Figref{evolving-dpr-ts1}~and~\figref{evolving-dpr-ts4} show the time-series of the top 100 pages by the difference measure for Wikipedia with $s=1$ and $s=4$ without smoothing. Many of these pages reveal the ability of dynamic PageRank to mesh the network structure with changes in external interest.   For instance, in \figref{evolving-dpr-ts4}, we find pages related to an Australian earthquake (43, 84, 82), the ``recently'' released movie ``Watchmen'' (98, 23-24), a famous musician that died (2, 75), recent ``American Idol'' gossip (34, 63), a remembrance of Eve Carson from a contestant on ``American Idol'' (88, 96, 34), news about the murder of a Harry Potter actor (60), and the Skittles social media mishap (94).  These results demonstrate the effectiveness of the dynamic PageRank to identify interesting pages that pertain to external interest.  The influence of the graph results in the promotion of pages such as Richter magnitude (84).  That page was not in the top 200 from page views.

\begin{landscape}
\begin{figure*}
\includegraphics[width=\linewidth]{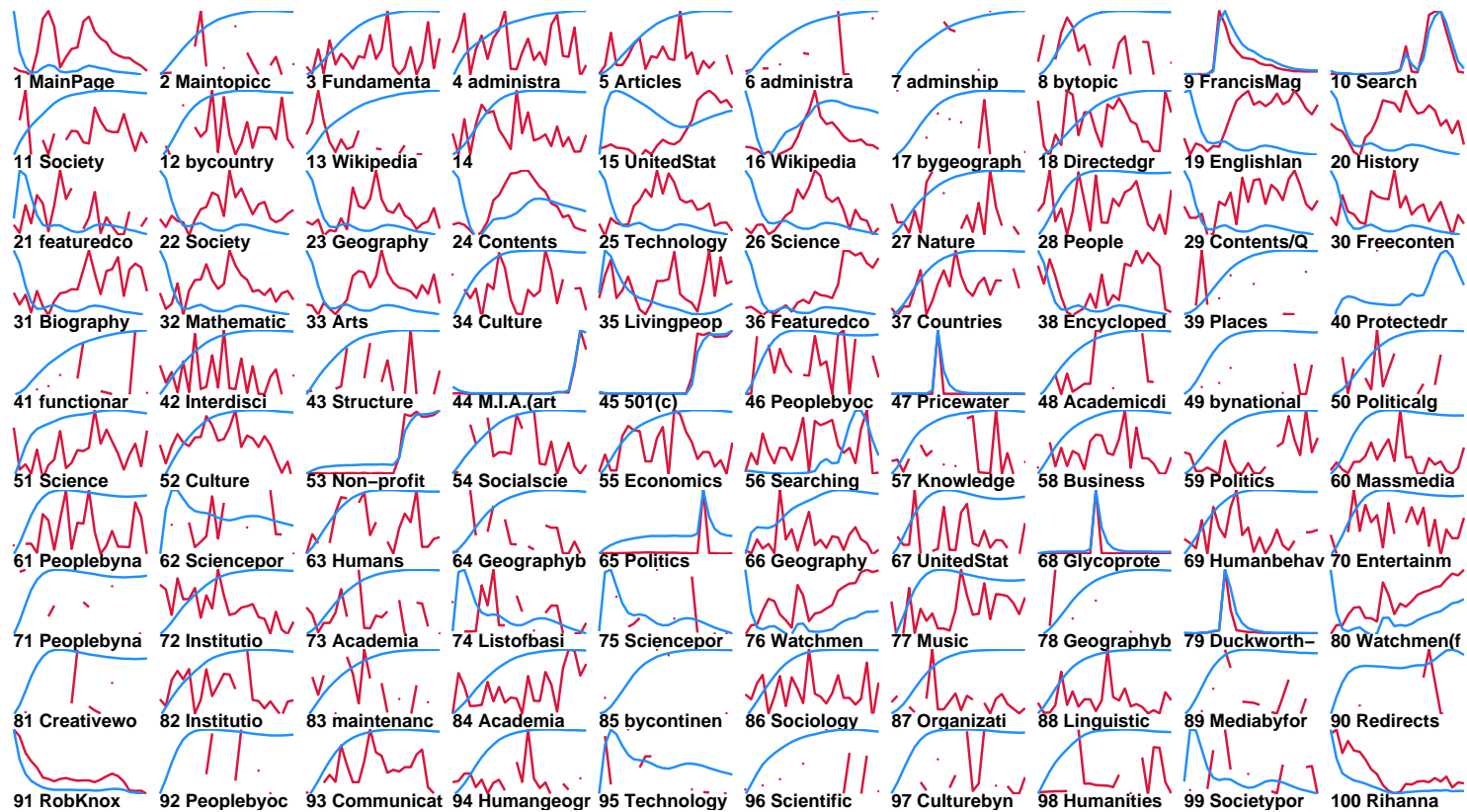}
   \caption{The top-100 Wikipedia pages that fluctuate the most as determined by the difference ranking from our dynamic PageRank approach. The blue curves are the transient scores, and the red-curves are the raw page view data.  The horizontal axis is 24 hours, and the vertical axes are normalized to show the range of the data.}
  \label{fig:evolving-dpr-ts1}
\end{figure*}

\begin{figure*}[t!]
\includegraphics[width=\linewidth]{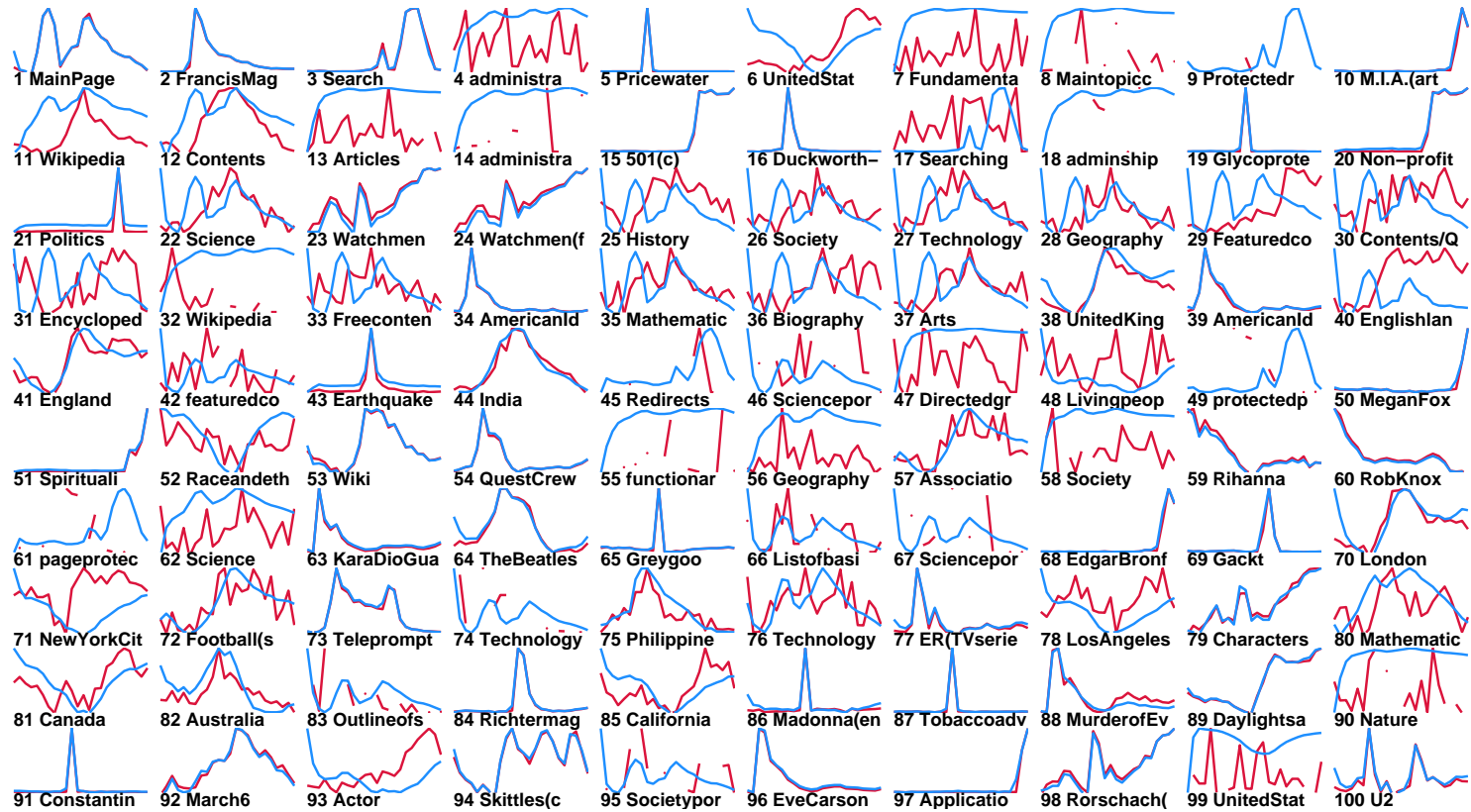}
   \caption{We again plot the top 100 pages from the difference rank, now using $s=4$.  The blue curves are the transient scores, and the red-curves are the raw page view data.  The horizontal axis is 24 hours, and the vertical axes are normalized to show the range of the data. This choice gives a similar ordering as our previous forward Euler iteration method from \citet{Rossi-2012-dynamicpr}. Note the large change between the transient scores for ''MainPage'' between this figure and \figref{evolving-dpr-ts1}.}
  \label{fig:evolving-dpr-ts4}
\end{figure*}

\end{landscape}

%%%%%%%%%%%%%%%%%%
%RESULTS
%%%%%%%%%%%%%%%%%%
%\section{Empirical Results}\label{sec:results}
%or applying?
\section{Applications of time-dependent telportation}\label{sec:results}
This section explores the opportunity of using Dynamic PageRank for a variety of applications outside of the context of ranking.

\subsection{Predicting future page views \textit{\&} tweets}\label{sec:prediction}
We begin by studying how well the dynamical system can \emph{predict} the future.
Formally, given a lagged time-series $\vp_{t-w},..., \vp_{t-1}, \vp_{t}$ ~\cite{ahmed2010empirical}, the goal is to predict the future value $\vp_{t+1}$ (actual page views or number of tweets). 
This type of temporal prediction task has many applications, such as actively adapting caches in large database systems, or dynamically recommending pages. 

We performed one-step ahead predictions ($t+1$) using linear regression. 
That is, we learn a model of the form:
\[ \left[ \begin{array}{cccccccc}
\bar{\vf}(t-1) & \bar{\vf}(t-2) & \ldots  & \bar{\vf}(t-w)  \end{array} \right]
\; \vb \; \approx \; 
 \begin{array}{c}
\vp(t)  \end{array}\]
where $w$ is the window-size,  and $\bar{\vf}(\cdot)$ is either page views or both page views and transient scores.
After fitting $\vb$, the model predicts $\vp(t+1)$ as 
\[
\left[ \begin{array}{cccccccc}
\bar{\vf}(t) & \bar{\vf}(t-1) & \cdots  & \bar{\vf}(t-w+1)  \end{array} \right]
\; \vb
\]
We use the symmetric Mean Absolute Percentage Error (sMAPE)~\cite{ahmed2010empirical} measure to evaluate the prediction:
\[\text{sMAPE} = \frac{1}{|T|} \sum_{t=1}^{|T|} \frac{|p_t - \hat{p}_t|}{(p_t + \hat{p}_t)/2}.\]
This relative error measure averages all the relative prediction errors over all the time-steps.  We then average it over nodes.

We study two predictive modes.  The \emph{base model} uses only the time-series of page views  or tweets to predict the future page views or tweets. The \emph{dynamic teleportation model} uses both the transient scores with smooting and page views to predict the future page views (or tweets).

We evaluate these models for prediction on \textit{stationary} and \textit{non-stationary} time-series. 
Informally, a time-series is weakly stationary if it has properties (mean and covariance) similar to that of the time-shifted time-series. 
We consider the top and bottom 10,000 nodes from the difference ranking as nodes that are approximately non-stationary (volatile) and stationary (stable), respectively.  
Table~\ref{table:pv-preds} compares the predictions of the models across time for non-stationary and stationary prediction tasks. 
Our findings indicate that the Dynamic PageRank time-series provides valuable information for forecasting future tweet rates; however, it adds little (if any) accuracy in forecasting future page views on Wikipedia. 

%for wikipedia, we use 48hours, instead of the previous 20
\begin{table}
\caption{The ratio between the base model and the model with dynamic teleportation scores with $s=1,2,6,$ and $\infty$, for three smoothing parameters.  (Here, $s=\infty$ corresponds to solving the PageRank problem exactly for each change in teleportation.)  If this ratio is less than $1$, then the model with the dynamic teleportation scores improves the prediction performance.  We also distinguish between prediction problems with highly volatile nodes (non-stationary) and nodes with relatively stable behavior (stationary). The results show a much stronger benefit for Twitter than for Wikipedia}
\label{table:pv-preds}
\centering\scriptsize
\begin{tabularx}{\linewidth}{ ll XXXXX}
\toprule
\textbf{{Dataset}} & \textbf{Type} & $\theta$ & \multicolumn{4}{l}{\textbf{Error Ratio}} \\
\cmidrule{4-7}
 &  &  & \multicolumn{2}{l}{$s$ (timescale)} & & \\
 &  &  & \textbf{1} & \textbf{2} & \textbf{6} & $\mathbf{\infty}$ \\
\midrule
\textsc{twitter} & stationary & 
	\textbf{0.01}  & 0.635 & 0.929  &  0.913  &  0.996 \\ 
 & & 	\textbf{0.50}  & 0.636 & 0.735  &  0.854  &  0.939 \\ 
 & & 	\textbf{1.00}  & 0.522 & 0.562  &  0.710  &  0.963 \\ 
\addlinespace 
 & non-stationary & 
	 \textbf{0.01}  & 0.461 & 0.841  &  1.001  &  0.992 \\ 
 & & 	 \textbf{0.50}  & 0.261 & 0.608  &  0.585  &  0.929 \\ 
 & & 	 \textbf{1.00}  & 0.137 & 0.605  &  0.617  &  0.918 \\ 
\midrule
\textsc{wikipedia} & stationary & 
	\textbf{0.01}  & 0.978 & 0.991  &  0.989  &  0.978 \\ 
 & & 	\textbf{0.50}  & 1.140 & 1.130  &  1.004  &  0.990 \\ 
 & & 	\textbf{1.00}  & 1.084 & 0.976  &  1.010  &  0.990 \\ 
\addlinespace 
 & non-stationary & 
	 \textbf{0.01}  & 0.968 & 1.011  &  0.968  &  1.004 \\ 
 & & 	 \textbf{0.50}  & 1.218 & 0.994  &  1.030  &  1.031 \\ 
 & & 	 \textbf{1.00}  & 1.241 & 0.996  &  0.957  &  0.998 \\ 
\bottomrule
\end{tabularx} 
\end{table}

 %Note that we smooth 
 
For Twitter, the dynamic teleportation model improves predictions the most with the  non-stationary nodes. The diffusion of activity captured by the model allows our model to detect, early on, when the external interest of vertices will change, before that change becomes apparent in the external interest of the vertices.
This is easiest to detect when there is a large sudden change  in external interest of a neighboring vertex.

 \begin{figure*}[t!]
\centering
    \subfigure[Temporal Patterns]
    	{\label{fig:centroids}\includegraphics[width=4.2in]{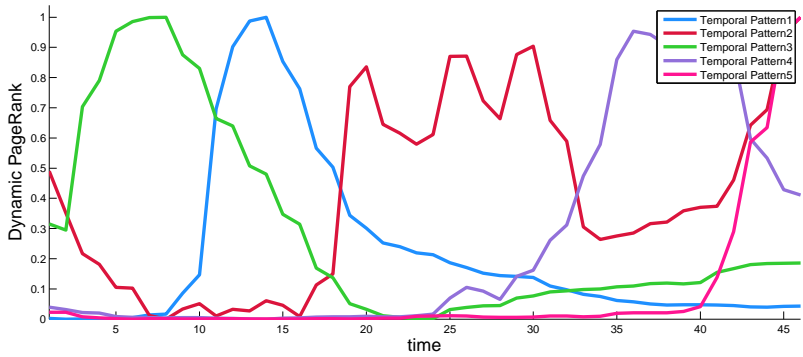}}
    \subfigure[Vertices with Similar Dynamics]
    	{\label{fig:similar-timeseries}\includegraphics[width=5.3in]{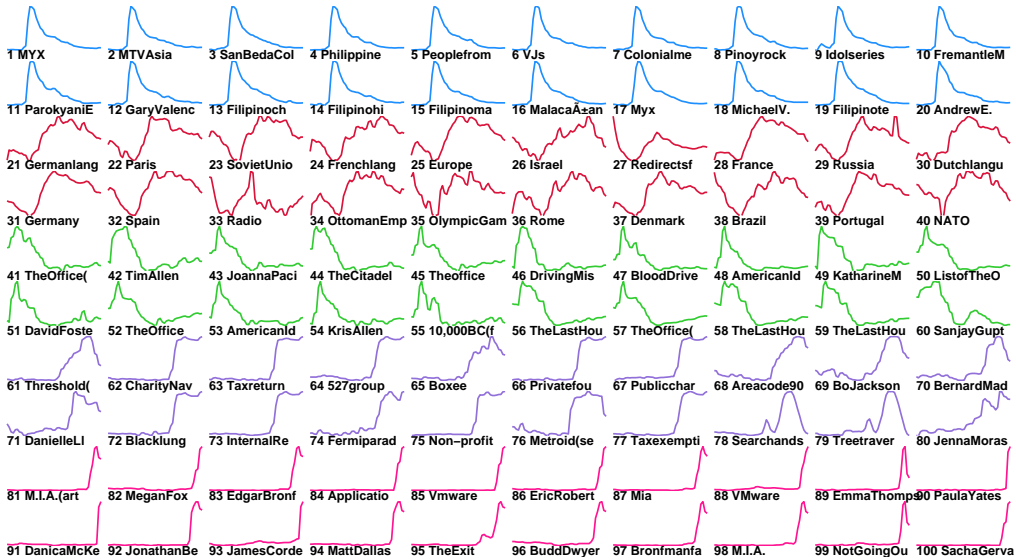}}
%{images/clustering/clustering-ts4-24h-50k-dpr_trends.eps}}
  	  \caption{Vertices with similar dynamical properties are grouped together. The visualization reveals the important dynamic patterns (spikes, trends) present from March 6th, 2009 in our large collection of time-series from Wikipedia.
 For each hour, we sample twice from the continuous function $\vx(t)$ and utilize these intermediate values in the clustering.
  	  } 	  
  \label{fig:dpr-clustering}
\end{figure*}

\subsection{Clustering transient score trends}
\label{sec:clustering}
Identifying vertices with similar time-series is important for modeling and understanding large collections of multivariate time-series.
We now group vertices according to their transient scores. By using the difference rank measure $\vd$ for $s=4$, we cluster the top 5,000 vertices using k-means with $k=5$, repeat the clustering 2,000 times, and take the minimum distance clustering identified. 

The cluster centroids are temporal patterns, and the main patterns in the dynamic PageRanks are visualized in \figref{centroids}.
Pattern 2 represents European-centric behavior, whereas the others correspond to spikes or unusual events occurring within the dynamic PageRank system.
\Figref{similar-timeseries} plots the 20 closest vertices matching the patterns above.
A few pages from the five groups are consistent with our previously discussed results from \figref{evolving-dpr-ts4}.
One such unusual event is related to the death of a famous musician/actor from the Philippines (see pages 1-20).
%Some of the pages from the first cluster are related to the death of a famous musician/actor from the Philippines (see pages 1-20).
The pages from the third cluster (41-60) are related to ``American Idol'' and other TV shows/actors.
Also some of the pages from the fourth cluster relate to Bernard Madoff (63, 66, 67, 70, 73), six days before he plead guilty in the largest financial fraud in U.S. history.
This grouping reveals many of the standard patterns in time-series such as spikes and increasing/decreasing trends~\cite{Yang-2011-temporal-variation}.

\subsection{Towards causal link relationships}
\label{sec:granger}
In this section, we use Granger causality tests~\cite{granger1969investigating} on the collection of transient scores to attempt to understand which links are most important.  The Granger causality model, briefly described below, ought to identify a causal relationship between the time-series of any two vertices connected by a directed edge.  This is because \emph{there is a causal relationship} between their time-series in our dynamical system.  However, due to the impact of the time-dependent teleportation, only some of these links will be identified as causal.  We wish to investigate this smaller subset of links.

Intuitively, if a time-series $X$ causally affects another $Y$, then the past values of $X$ should be helpful in predicting the future values of $Y$, above what can be predicted based on the past values of $Y$ alone.  This is formalized as follows:
the error in predicting $\hat{y}_{t+s}$ from $y_t, y_{t-1}, \ldots$ should be larger than 
the error in predicting $\hat{y}_{t+s}$ from the joint data $y_t, y_{t_1}, \ldots, x_{t}, x_{t-1}, \ldots$ if $X$ causes $Y$. As our model, we chose to use the standard vector-autoregressive (VAR) model from econometrics~\cite{box2011time}. 
This is implemented in a Matlab code by~\citet{lesage1999applied}.
The standard $p$-lag VAR model takes the form:
\[
\left[ 
\begin{array}{c}
y_{t} \\
x_{t} \\
\end{array} 
\right]
= \vc + \sum_{i=1}^{p} \mathrm{\mathbf{M}}_i 
\left[ 
\begin{array}{c}
y_{t-i} \\
x_{t-i} \\
\end{array} \right] + \mathbf{e}_t\]
where $\vc$ is a vector of constants, $\mathbf{\mM}_i$ are the $n \times n$ coefficient (or autoregressive) mixing matrices and $\ve_t$ is the unobservable white-noise. For the results shown below, $p=2$.
We then use the standard F-test to determine significance.

In Table~\ref{table:causality-example}, we show the causal relationships identified among the out-links of the article \emph{Earthquake}.  Recall that there was a major earthquake in Australia during our time-window.  We wish to understand which of the out-links appeared to be sensitive to this large change in interest in Earthquake.
We use a significance cutoff of 0.01 and test for Granger causality among the time-series with $s=4$.

\begin{table}[t!]
\caption{Example of Causality in Wikipedia. We only consider pages with a $pval < 0.01$ as statistically significant. 
The page with values ``caused'' by Earthquake represent ideas related to earthquakes. 
All pages below are significant with $pval < 0.01$.}
%since it should be of importance
%Intuitively, these pages are related to Earthquake, and moreover represent ideas, concepts, and places, all related to Earthquakes}
\label{table:causality-example}
\centering\small
\begin{tabularx}{3in}{ r X l}
\toprule
 \textbf{{Earthquake Granger causes}} & & \textbf{{p-value}} \\
\midrule
Seismic hazard & & $0.003535 $ \\
Extensional tectonics & & $0.003033$ \\
Landslide dam & & $0.002406$  \\
Earthquake preparedness & & $0.001157$  \\
Richter magnitude scale & & $0.000584$  \\
Fault (geology) & & $0.000437$  \\
Aseismic creep & & $0.000419$  \\
Seismometer & & $0.000284$  \\
Epicenter & & $0.000020$  \\
Seismology & & $0.000001$  \\
\bottomrule
\end{tabularx}
\end{table}

\section{Conclusion}
PageRank is one of the most widely used network centrality measures. Our dynamical system reformulation of PageRank permits us to incorporate time-dependent teleportation in a relatively seamless manner.  Based on the results presented here, we believe this is an interesting variation on the PageRank model.  For instance, we can analyze certain choices of oscillating teleportation functions (Lemma~\ref{lem:fluctuate}).  Our empirical results show that the maximum change in the transient rank values identifies interesting sets of pages.  
Furthermore, this method is simple to implement in an online setting using either a forward Euler or Runge-Kutta integrator for the dynamical system.  We hope that it might find a use in online monitoring systems.

One important direction for future work is to treat the inverse problem.  That is, suppose that the observed page views reflect the behavior of these random surfers.  Formally, suppose that we equate page views with samples of $\vx(t)$.  Then, the goal would be to find $\vv(t)$ that produces this $\vx(t)$.  This may not be a problem for websites such as Wikipedia, due to our argument that the majority of page views reflect search engine traffic.  But for many other cases, we suspect that $\vx(t)$ may be much easier to observe.

\bibliographystyle{plainnat}
%\bibliography{rossi,gleich}
\bibliography{dpr_im}

\end{document}